\theoremstyle{definition}
\newtheorem{definition}{Definition}[section]
\theoremstyle{remark}
\DeclareMathOperator*{\argmin}{arg\,min}
\newcommand{\comment}[1]{}
\newcommand{\R}{\mathbb{R}}
\newcolumntype{L}[1]{>{\raggedright\let\newline\\\arraybackslash\hspace{0pt}}m{#1}}
\newcolumntype{C}[1]{>{\centering\let\newline\\\arraybackslash\hspace{0pt}}m{#1}}
\newcolumntype{R}[1]{>{\raggedleft\let\newline\\\arraybackslash\hspace{0pt}}m{#1}}
\begin{document}

\title{A privacy preserving querying mechanism with high utility for electric vehicles}

\author{Ugur Ilker Atmaca, 
        Sayan Biswas, \and
        Carsten Maple, \and 
        Catuscia Palamidessi 
        
\thanks{U. Atmaca and Carsten Maple are with the Secure Cyber Systems Research Group, WMG, University of Warwick \& Alan Turing Institute, UK, Emails: ugur-ilker.atmaca@warwick.ac.uk, cm@warwick.ac.uk}

\thanks{S, Biswas and C. Palamidessi are with the INRIA \& LIX, \'{E}cole Polytechnique, France, Emails: sayan.biswas@inria.fr, catuscia@lix.polytechnique.fr}
}


\markboth{
}%
{Atmaca \MakeLowercase{\textit{et al.}}: Bare Demo of IEEEtran.cls for IEEE Journals}

\IEEEpubid{
}
\maketitle

\begin{abstract}

Electric vehicles (EVs) are gaining popularity due to the growing awareness for a sustainable future. However, since there are disproportionately fewer charging stations than EVs, range anxiety plays a major role in the rise in the number of queries made along the journeys to find an available charging station. On the other hand, the use of personal data in various types of analytics is increasing at an unprecedented rate. Hence, the risks of privacy violation are also surging. Geo-indistinguishability is one of the standards for formalising location privacy as a generalisation of the local differential privacy. However, the noise has to be carefully calibrated considering the implications of potential utility-loss. In this paper, we introduce approximate geo-indistinguishability (AGeoI) which allows the EVs to obfuscate the individual query-locations while ensuring that they remain within their preferred area of interest. It is vital because journeys are often sensitive to a sharp drop in QoS, which has a high cost for the extra distance to be covered. We apply AGeoI and dummy data generation to protect the privacy of EVs during their journeys and preserve the QoS. Analytical insights and experiments are used to demonstrate that a very high percentage of EVs get privacy for free and that the utility-loss caused by the privacy-gain is minuscule. Using the iterative Bayesian update, our method allows for a private and highly accurate prediction of charging station occupancy without disclosing query locations and vehicle trajectories, which is vital in unprecedented traffic congestion scenarios and efficient route-planning.
\end{abstract}

\begin{IEEEkeywords}
Location Privacy, Geo-Indistinguishability, Electric Vehicle, Charging Station, Privacy-Utility Trade-off
\end{IEEEkeywords}

\section{Introduction}




\IEEEPARstart{A}{ir} pollution is one of the immediate issues that the world is experiencing~\cite{FERRERO2016450,hampshire2018electric,Zhang_2020}. In the United Kingdom in 2019, 27\% of all greenhouse gas emissions come from transportation, as the largest emitting sector~\cite{hickman2007looking,kufeoglu2020emissions,Millen2021}. Hence, the transportation industry and academic communities are increasingly interested in developing alternative energy vehicles to reduce emissions. Automobile manufacturers are introducing a new generation of electric vehicles (EVs) that often employ connected and automated driving functions~\cite{gersdorf2020mckinsey}.

EVs are regarded as one of the most promising means of reducing emissions and reliance on fossil fuels. Along with environmental benefits, EVs provide superior energy efficiency to conventional vehicles~\cite{hensley2009electrifying}. As the cost of batteries continues to decrease, the large scale adoption of EVs is becoming more viable~\cite{gao2014road}. Despite the advantages and competitive cost, many customers remain concerned about running out of battery power before reaching their destination or waiting for their EVs to charge. The primary obstacles to EV adoption are the availability of chargers and the range that can be travelled on a single charge, often referred to as \emph{range anxiety} in the literature~\cite{lombardi2018electric}.

There has been some recent focus on forecasting how busy the charging stations (CS) are in certain areas to ensure that the EVs can plan their journeys conveniently~\cite{forecastingCS, DeepInformationFusion}. However, the existing research in this direction, primarily founded upon machine learning based methods, do not address the privacy concerns involved in such predictive techniques and do not consider situations where there may arise an unprecedented traffic congestion (e.g. due to a one-off concert or an event). One of the most successful approaches for protecting the privacy of personal data while analysing and exploiting the utility of data is along the lines of \emph{differential privacy} (DP)~\cite{DworkDP1,DworkDP2}, which mathematically guarantees that the query output does not change significantly regardless of whether a specific personal record is in the dataset or not. Our proposed method, in addition to allowing the EVs to have formal privacy guarantees on their queries to locate the nearest charging stations, enables the users to estimate the live occupancy of the charging stations efficiently allowing convenient journey-planning (e.g., avoid going through a route aiming for a CS which has a dense crowd of EVs looking for a CS around it).


However, the classical central DP requires a trusted curator who is responsible for adding noise to the data before publishing or performing analytics on it. A major drawback of such a central model is that it is vulnerable to security breaches because the entire original data is stored in a central server. Moreover, there is the risk of having an adversarial curator.  To circumvent the need for such a central dependency, a local model of differential privacy (DP), also called \emph{local differential privacy}~\cite{DuchiLDP}, has been getting a lot of attention lately. In this model, users apply the LDP mechanism directly to their data and send the locally changed data to the server.

LDP is particularly suitable for situations where users need to communicate their personal data in exchange for some service. One such scenario is the use of location-based services (LBS), where a user typically reports her location in exchange for information like the shortest path to a destination, points of interest in the surroundings, traffic information, friends nearby, etc. One of the recently popularised standards in location privacy is \emph{geo-indistinguishability} (GeoI)~\cite{AndresKostasCatuscia_GeoInd}, which optimises the quality of service (QoS) for users while preserving a generalised notion of LDP on their location data. The obfuscation mechanism of GeoI depends on the distance between the original location of a user and a potential noisy location that they report~\cite{Bordenabe:14:CCS,Fernandes:21:LICS}. 

GeoI can be implemented directly on the user's device (tablet, smartphone, etc.). The fact that the users can control their explicit privacy-protection level for various LBS makes it very appealing. However, a drawback of injecting noise locally to the datum is that it deteriorates the QoS due to the lack of accuracy of the data. 



On the other hand, future vehicles are getting more sophisticated in their sensory, onboard computation, and communication capacities. Furthermore, the emergence of Mobile Edge Computing (MEC) also changes the Intelligent Transportation Systems (ITS) by providing a platform to assist computationally heavy tasks by offloading the computation to the Edge cloud~\cite{gillam2018exploring}. This architecture often employs three tiers, with the vehicle on the first, MEC on the second, and standard cloud services on the third~\cite{maple2019connected}. Figure~\ref{fig:system_arch} shows the system architecture for the location privacy framework proposed in this paper. 

ITS provides a platform containing distributed and resource-constrained systems to support real-time vehicular functions where these functions' efficacy relies on the data shared across entities. However, the risk of privacy disclosure and tracking increases due to data sharing~\cite{hahn2019security}. Privacy-preserving schemes are developed using established techniques such as group signature, anonymity, and pseudonymity~\cite{lin2013achieving,kumar2015intelligent}. However, it is possible to identify privatised data with adequate background information. Hence, DP approaches have emerged as the gold standard of data privacy because they provide a formal privacy guarantee independent of a threat actor's background knowledge and computing capability~\cite{zhao2020survey}. 

GeoI is the state-of-the-art method for location privacy-preserving with LDP. It can preserve one's location privacy among a set of locations with similar probability distributions without requiring a trusted third-party. It provides rigorous privacy for location-based query processing and location data collection by modelling the location domain based on the Euclidean plane. However, vehicles are located on the road network under normal circumstances. For vehicular location queries, GeoI mechanism may result in publishing unrealistic privatised locations such as houses, parks, or lakes. Thus, there is a need for an adapted model of GeoI for vehicular application. This paper proposes a novel privacy model called Approximate Geo-Indistinguishability (AGeoI), based on the notion of GeoI by using a discrete road network graph. Our key contributions in this paper are outlined as follows.
\begin{enumerate}
    \item We presented the notion of AGeoI, a formal standard of location-privacy in a bounded co-domain, by generalising the classical paradigm of GeoI and adapting it to a graphical environment. We illustrate its applicability by proving that it satisfies the compositionality theorem. Moreover, we show that the truncated Laplace mechanism satisfies AGeoI by deriving the appropriate $\epsilon$ and $\delta$ as privacy parameters. 
    \item We proposed a two-way privacy-preserving navigation method for EVs dynamically querying for charging stations (CSs) on road networks — the method protects against threats to individual locations of their queries by ensuring AGeoI, and we provide protection against adversaries tracing the trajectories of the EVs by interpolating their query locations in an online setting.
    \item We performed experiments on real vehicular journey data from San Francisco with real locations of CSs from the area 
    under two settings of their sparsity in the road network, and show that our method ensures a very high fraction of EVs who enjoy \emph{privacy for free} and that the cost of utility-loss for the EVs is very low compared to the formal gain in privacy. 
    \item We demonstrated that our method, aside from providing formal location-privacy guarantees, allows the EVs to predict the live occupancy of the charging stations based on the sanitised queries received by the server, ensuring that the users can plan their journeys conveniently. 
\end{enumerate}

The rest of this paper is organised as follows. Section \ref{sec:related_work} reviews some of the related work in this area. Section \ref{sec:preliminaries} introduces some fundamental notions on DP and GeoI. Section \ref{sec:approx_geo_ind} develops the mathematical theory of AGeoI. Section \ref{sec:system_model} elucidates the model of our proposed mechanism by formalizing the problem we are tackling, thoroughly discussing system architecture, and laying out the privacy-threat landscape we are addressing in this work. Section \ref{sec:cost_of_privacy_analysis} delves into insight into the cost of privacy on the EVs induced by our mechanism. Section \ref{sec:experiments} presents the experimental results to illustrate the working of our mechanism, and Section~\ref{sec:conclusion} concludes the paper. 


\section{Related Work}\label{sec:related_work}


Both corporate and academic communities have recently piqued interest in advancing EVs and charging infrastructure to improve the transportation system's sustainability. Despite the advancements, the EV sector confronts challenges that delay the adoption process, such as range anxiety, an absence of convenient and available charging infrastructure, and waiting time to charge~\cite{franke2013understanding,kumar2020adoption}.
An offline static map of CSs is insufficient to resolve these obstacles since EVs may need to reserve a charging station when a trip is planned or query the available stations based on their battery state, and CSs must be reservable. Thus, live vehicular and charging station data is utilised in querying and reservation/scheduling mechanisms~\cite{tian2016real,zhang2021intelligent,flocea2022electric}. Encryption techniques can be used in such mechanisms to prevent external intrusions, but they cannot preserve users' privacy from malicious servers and third-party providers.   

Several data types are considered in these mechanisms, including real-time location, intended route, battery level, and station availability, to ensure the drivers are not detoured from their intended route\cite{tian2016real,wang2019sharedcharging}. Although disclosing such information poses privacy concerns for the driver's location and vehicle tracking, the privacy requirements of such mechanisms are not sufficiently studied in the literature. Existing methods for planning charging points for EV journeys are considered mechanisms for confidentiality and integrity, but the drivers' location privacy is regarded as an issue of trust in the third-party service provider~\cite{plugshare,chargepoint}. 

This problem can be addressed by several approaches based on threat model of the system. Location anonymity is achieved through cloaking an area~\cite{chow2009casper,niu2014achieving}. This approach can only be applied to the Edge of our system model to provide anonymity to a group of EVs, but we consider the Edge as an honest-but-curious threat actor and aim to preserve individual vehicles' privacy locally. Thus, such techniques are not trivially applicable to our considered threat model. Furthermore, anonymity techniques do not provide a formal privacy guarantee~\cite{DworkDP_Compositionality}. Similarly, mix-network approaches cannot be applied because there is no guarantee that multiple vehicles will be present in an Edge's coverage in any timestamp due to vehicles' movement~\cite{guo2018independent}. 

An applicable approach to download the charging station's live map on EVs to search for the nearest or on-the-route available charging station has been considered and studied by the community~\cite{CacheLocal}; however, the communication overhead of this technique is predicted to be much higher than the vehicles' location-based inquiry since it will require downloading a recent snapshot of the map for each query and, therefore, has been criticised in the literature~\cite{ChallengesOfLocal}. Moreover, due to the absence of data sharing, such methods hinder the statistical utility of the location data for the servers that may be useful for a variety of purposes (e.g. providing vital statistics to industries and institutions for optimally placing the charging stations on the map based on the query densities) and prevent the EVs from receiving any information about the traffic around and occupancy of certain charging stations restricting them to plan their journeys accordingly. 

DP methods are increasingly being deployed to preserve location privacy in a variety of domains, including automotive systems. The studies in~\cite{zhou2018achieving,luo2019geo} proposed models by deploying a GeoI-based mechanism on the Edge for location-based services. However, their approach did not consider preserving vehicles' location privacy against the Edge. An approach that compliments the problem we aim to address in this paper was proposed by Qiu et al. in \cite{qiu2020location} where the authors proposed a technique to crowd-source a task in vehicular network while preserving GeoI of the location of the vehicles offering Mobility as a Service in the spatial network to solve a task at a publicly known location in the map (e.g. taxi services). The problem formulation in this work is the inverse of what we aim to achieve in this paper. As a result, the work in \cite{qiu2020location} cannot be extended to address the privacy concerns induced my multiple queries dynamically made along the journey. 

In \cite{Cunningham2021Trace}, Cunningham et al. studied the problem of trajectory sharing under DP and proposed a mechanism to tackle it. However, this work assumes the setting of an offline trajectory sharing which breaks down in the practical environment where the trajectories are being shared online as there is no prior information or limitation on the number of queries made by an EV during a journey, and their respective locations. Therefore, the method proposed by the authors in \cite{Cunningham2021Trace} cannot be directly adapted to our dynamic environment closely simulating the real-world scenario for such a use case. 

Of late, a major direction of research is along the lines of studying the statistical utility of differentially private data. A standard notion  of statistical utility, which is extended to a variety of contexts, is the precision of the estimation of  the distribution on the original data from that of the noisy data. Iterative Bayesian update (IBU)~\cite{AgarwalIBU,agrawal2005privacy}, an instance of the 
famous \emph{expectation maximization (EM)} method from statistics, provides one of the most flexible and powerful estimation techniques  and has recently become in the focus of the community~\cite{EhabConvergenceIBU, EhabGIBU}. In this work, we use IBU to approximate the distribution of the true locations of the queries made to the server and based on that, the users of the EVs can predict the availability of the the CSs around them in real time and plan their route accordingly.




\section{Preliminaries}\label{sec:preliminaries}

The most successful approach to formally address the privacy risks is DP, mathematically guaranteeing that the query output does not change significantly regardless of whether a specific personal record is in a dataset or not. Most research performed in this area probes two main directions. One is the classical central framework~\cite{DworkDP1,DworkDP2}, in which a trusted third-party (the curator) collects the users' personal data and obfuscates them with a differentially private mechanism. 
\begin{definition}[Differential privacy~\cite{DworkDP1,DworkDP2}]
\label{def:dp}
For a certain query, a randomizing mechanism $\mathcal{R}$ provides \emph{$\epsilon$-DP} if, for all neighbouring\footnote{differing in exactly one place} datasets, $D$ and $D'$, and all $S \subseteq$ Range($\mathcal{R}$), we have
$
\mathbb{P}[\mathcal{R}(D) \in S] \leq e^{\epsilon}\,\mathbb{P}[\mathcal{R}(D') \in S]
$.
\end{definition}

A major drawback of central model is that it is vulnerable to security breaches because the entire original data is stored in a central server. Moreover, there is the risk that the curator may be corrupted. Therefore, a local variant of the central model has been widely popularized of late~\cite{DuchiLDP}, where the users apply a randomizing mechanism locally on their data and send the perturbed data to the collector such that a particular value of a user's data does not have a major probabilistic impact on the outcome of the query. 

\begin{definition}[Local differential privacy~\cite{DuchiLDP}]
\label{def:ldp}
Let $\mathcal{X}$ and $\mathcal{Y}$ denote the spaces of original and noisy data, respectively. A randomizing mechanism $\mathcal{R}$ provides \emph{$\epsilon$-LDP} if, for all $x,\,x'\,\in\,\mathcal{X}$, and all $y\,\in\,\mathcal{Y}$, we have
$
\mathbb{P}[\mathcal{R}(x)=y] \leq e^{\epsilon}\,\mathbb{P}\left[\mathcal{R}(x')=y\right]
$.
\end{definition}

Recently, GeoI~\cite{AndresKostasCatuscia_GeoInd}, a variant of the local DP capturing the essence of the distance between locations~\cite{Bordenabe:14:CCS,Fernandes:21:LICS} has been in focus as a standard for privacy protection for location based services, being motivated by the idea of preserving the best possible quality of service despite the local obfuscation operated on the data.

\begin{definition}[Geo-indistinguishability~\cite{AndresKostasCatuscia_GeoInd}]
\label{def:geoind}
Let $\mathcal{X}$ be a space of locations and let $d_{\text{E}}(x,x')$ denote the \emph{Euclidean distance} between $x\in \mathcal{X}$ and $x'\in \mathcal{X}$. A randomizing mechanism $\mathcal{R}$ is \emph{$\epsilon$-geo-indistinguishable} if for all $x_1,\,x_2\in\mathcal{X}$, and every $y\in \mathcal{X}$, we have
$
\mathbb{P}[\mathcal{R}(x)=y] \leq e^{\epsilon d_{\text{E}}(x_1,x_2)}\,\mathbb{P}\left[\mathcal{R}(x')=y\right]
$.
\end{definition}


\begin{definition}[Iterative Bayesian update~\cite{AgarwalIBU}]\label{def:IBU}
Let $\mathcal{C}$ be a privacy mechanism that locally obfuscates points from the discrete space $\mathcal{X}$ to $\mathcal{Y}$ such that $\mathcal{C}_{xy}=\mathbb{P}(y|x)$ for all $x,y\in \mathcal{X},\mathcal{Y}$. Let $X_1,\ldots,X_n$ be i.i.d. random variables on $\mathcal{X}$ following some distribution $\pi_{\mathcal{X}}$. Let $Y_i$ denote the random variable of the output when $X_i$ is obfuscated with $\mathcal{C}$. 

Let $\vb{y}\in\mathcal{Y}^n$ be a realisation of $\{Y_1,\ldots,Y_n\}$ and $q$ be the empirical distribution obtained by counting the frequencies of each element of $\mathcal{Y}$ as observed in $\vb{y}$. The \emph{iterative Bayesian update (IBU)} is a state-of-the-art EM technique to estimate $\pi_{\mathcal{X}}$ by converging to the maximum likelihood estimate (MLE) of $\pi_{\mathcal{X}}$ with the knowledge of $\vb{y}$ and  $\mathcal{C}$. IBU works as follows:
\begin{enumerate}
\item Start with any full-support PMF $\theta_0$ on $\mathcal{X}$.

\item Iterate $ \theta_{t+1}(x)=\sum\limits_{y\in\mathcal{Y}}q(y)\frac{\theta_t(x)\mathcal{C}_{xy}}{\sum\limits_{z\in\mathcal{X}}\theta_t(z)\mathcal{C}_{zy}}$ for all $x\in\mathcal{X}$.
\end{enumerate}
\end{definition}

\begin{table}
\caption{List of key notations}\label{table:notations}
\centering
\resizebox{\columnwidth}{!}{%
\begin{tabular}{cc}
\hline
Notation & Description \\ \hline
 
$\mathcal{X}$  & Domain of original locations \\
$d_{\mathcal{X}}$ & Distance on $\mathcal{X}$\\
$\mathcal{Y}$  & Domain of obfuscated locations \\
$d_{\mathcal{Y}}$ & Distance on $\mathcal{Y}$\\
$\mathbb{P}_{\mathcal{K}}\left[y\vert x\right]$ & Prob. that mechanism $\mathcal{K}$, applied to value $x$, reports $y$\\
$I$ & Fixed Edge in the network\\
$R(I)$ & Area of coverage by $I$\\
$m$ & Number of locations reported by each EV\\
$l_u$ & Vector of locations reported by EV $u$\\
$\mathcal{L}(t)$ & Set of location vectors received by $I$ at time $t$\\
$\mathcal{L}'(t)$ & Shuffled set of all individual locations queried at time $t$\\
$\mathcal{R}(t)$ & Set of nearest CSs for $\mathcal{L}'(t)$\\
$G$ & Road network graph\\
$d_G$ & Travelling distance in graph $G$\\
\hline
\end{tabular}%
}
\end{table}




\section{Approximate geo-indistinguishability}\label{sec:approx_geo_ind}

In the classical framework of GeoI~\cite{AndresKostasCatuscia_GeoInd}, the space of the noisy data is, in theory, unbounded under the planar Laplace mechanism. Under a certain level of GeoI that is achieved, the planar Laplace mechanism ensures a non-zero probability of obfuscating an original location to a privatised one which may be quite far, thus inducing a possibility of a substantial deterioration in the QoS of the users. This loss of QoS can be more sensitive in the context of the navigation of EVs, where it is extremely important to prioritize a bounded domain where a user is willing to drive -- this may be a result of time constraint, rising cost of fuel, geographical boundaries (e.g. international borders), etc. -- giving rise to an idea of \emph{area of interest} for each EV. This motivated us to extend the classical GeoI to a more generalized, approximate paradigm, inspired by the approach of the development of approximate DP from its pure counterpart.

Let $\mathcal{X}$ and $\mathcal{Y}$ be the spaces of the real and nosy locations equipped with distance metrics $d_{\mathcal{X}}$ and $d_{\mathcal{Y}}$, respectively. In general $\left(\mathcal{X},d_{\mathcal{X}}\right)$ and $\left(\mathcal{Y},d_{\mathcal{Y}}\right)$ may be different and unrelated. However, for simplicity, here we assume $\mathcal{X}\subseteq \mathcal{Y}$ and, therefore, $d_{\mathcal{X}}=d_{\mathcal{Y}}=d$, and we proceed to define the notion of \emph{approximate geo-indistinguishability}. Note that $d$ may not necessarily need to be symmetric, i.e., there may exist $x1,\,x_2\,\in\,\mathcal{Y}$ such that $d(x_1,x_2)\neq d(x_2,x_1)$.

\begin{definition}[Approximate geo-indistinguishability] A mechanism $\mathcal{K}$ is \emph{approximately geo-indistinguishable (AGeoI)} or $\emph(\epsilon,\delta)$-\emph{geo-indistinguishable} if for any set of values $S\subseteq\mathcal{Y}$ and any pair of values $x,x'\,\in\,\mathcal{X}$, there exists some $\epsilon,\,\delta\in\mathbb{R}_{\geq 0}$ such that $\delta e^{d(x,x')}\in[0,1]$, we have:
\begin{equation}\label{eq:approxGeoI}
    \mathbb{P}_{\mathcal{K}}\left[y\,\in\, S\vert x\right]\leq e^{\epsilon\,d(x,x')}\mathbb{P}_{\mathcal{K}}\left[y\,\in\, S\vert x'\right]+\delta\,e^{d(x,x')}
\end{equation}
\end{definition}

One of the biggest advantages of DP and all of its variants that are accepted by the community is the property of compositionality, where the level of privacy can be formally derived with repeated number of queries. Thus, we now enable ourselves to investigate the working of the compositionality theorem with the approximate geo-indistinguishability which we defined, to stay consistent with the literature~\cite{DworkDP_Compositionality}. 

\begin{restatable}{theorem}{compositionality}\label{th:compositionality}[Compositionality Theorem for AGeoI]
Let mechanisms $\mathcal{K}_1$ and $\mathcal{K}_2$ be $(\epsilon_1,\,\delta_1)$ and $(\epsilon_2,\,\delta_2)$ geo-indistinguishable, respectively. Then their composition is $(\epsilon_1+\epsilon_2,\,\delta_1+\delta_2)$-geo-indistinguishable. In other words, for every $S_1,\,S_2\subseteq \mathcal{Y}$ and all $x_1,\,x'_1,\,x_2,\,x'_2\,\in\,\mathcal{X}$, we have:
\begin{align}\label{eq:composition}
    \mathbb{P}_{\mathcal{K}_1,\mathcal{K}_2}\left[(y_1,y_2)\,\in\, S_1\times S_2\vert (x_1,x_2)\right]\nonumber\\
    \leq e^{\epsilon_1\,d(x_1,x'_1)+\epsilon_2\,d(x_2,x'_2)}\nonumber\\
    \times\mathbb{P}_{\mathcal{K}_1,\mathcal{K}_2}\left[(y_1,y_2)\,\in\, S_1\times S_2\vert (x'_1,x'_2)\right]\nonumber\\
    +\left(\delta_1+\delta_2\right)\,e^{d(x_1,x'_1)+d(x_2,x'_2)}
\end{align}
\end{restatable}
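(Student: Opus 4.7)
The plan is to use independence of $\mathcal{K}_1$ and $\mathcal{K}_2$ to factor the joint probability, then apply each mechanism's AGeoI bound in turn. Writing $d_i := d(x_i,x'_i)$ for brevity and noting that the two mechanisms act on disjoint inputs, the joint output factorises as
$$\mathbb{P}_{\mathcal{K}_1,\mathcal{K}_2}[(y_1,y_2)\in S_1\times S_2 \vert (x_1,x_2)] = \mathbb{P}_{\mathcal{K}_1}[y_1\in S_1 \vert x_1]\,\mathbb{P}_{\mathcal{K}_2}[y_2\in S_2 \vert x_2].$$
I would first expand the AGeoI bound for $\mathcal{K}_1$ against $x'_1$, distribute it through the second marginal, and in the emerging $\delta_1$-term use the trivial bound $\mathbb{P}_{\mathcal{K}_2}[y_2\in S_2 \vert x_2]\leq 1$ so as to expose a clean $\delta_1 e^{d_1}$ contribution.

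Next, in the remaining multiplicative term I would substitute the AGeoI bound for $\mathcal{K}_2$ against $x'_2$. This produces the desired leading term $e^{\epsilon_1 d_1 + \epsilon_2 d_2}\,\mathbb{P}_{\mathcal{K}_1}[y_1\in S_1 \vert x'_1]\,\mathbb{P}_{\mathcal{K}_2}[y_2\in S_2 \vert x'_2]$, which already matches the shape required by the multiplicative factor of (\ref{eq:composition}), together with a further residual of the form $e^{\epsilon_1 d_1}\mathbb{P}_{\mathcal{K}_1}[y_1\in S_1 \vert x'_1]\,\delta_2 e^{d_2}$ that I would collapse using $\mathbb{P}_{\mathcal{K}_1}[y_1\in S_1 \vert x'_1]\leq 1$.

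The main obstacle will be showing that the sum of the two resulting residuals, $\delta_1 e^{d_1} + \delta_2 e^{\epsilon_1 d_1 + d_2}$, is bounded by the claimed additive budget $(\delta_1+\delta_2)e^{d_1+d_2}$. Since $d_2\geq 0$ one immediately gets $\delta_1 e^{d_1}\leq \delta_1 e^{d_1+d_2}$; the matching bound $\delta_2 e^{\epsilon_1 d_1 + d_2}\leq \delta_2 e^{d_1+d_2}$ requires the small-parameter regime $\epsilon_1\leq 1$ that is standard for GeoI-style mechanisms and is consistent with the AGeoI constraint $\delta\,e^{d(x,x')}\in[0,1]$ built into the definition. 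The extension to $n$-fold composition then follows by a direct induction on $n$, applying the two-mechanism bound at each step.
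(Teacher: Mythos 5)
Your factorisation into $\mathbb{P}_{\mathcal{K}_1}[y_1\in S_1\vert x_1]\,\mathbb{P}_{\mathcal{K}_2}[y_2\in S_2\vert x_2]$ is the same starting point as the paper, but the way you dispose of the cross-term opens a genuine gap. After your second substitution you are left with the residual $\delta_2\,e^{\epsilon_1 d_1+d_2}$ (coming from $e^{\epsilon_1 d_1}\mathbb{P}_{\mathcal{K}_1}[y_1\in S_1\vert x'_1]\cdot\delta_2 e^{d_2}$ with $\mathbb{P}_{\mathcal{K}_1}[y_1\in S_1\vert x'_1]\le 1$), and to fit it inside the budget $(\delta_1+\delta_2)e^{d_1+d_2}$ you need $\epsilon_1\le 1$. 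That hypothesis is not in the theorem: the AGeoI definition only requires $\delta\,e^{d(x,x')}\in[0,1]$, which constrains $\delta$ and says nothing about $\epsilon$, and the paper's own experiments run $\epsilon$ up to $2$, so the regime $\epsilon>1$ is squarely in scope. Reversing the order of substitution does not rescue the argument: whichever mechanism you expand second, one $\delta$-residual ends up multiplied by a factor $e^{\epsilon_j d_j}\mathbb{P}_{\mathcal{K}_j}[y_j\in S_j\vert x'_j]$, which may exceed $e^{d_j}$ unless $\epsilon_j\le 1$. As written, your proof therefore establishes a strictly weaker statement than the one claimed.

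The missing idea is the truncation the paper builds into its per-mechanism bound. Writing $P_i=\mathbb{P}_{\mathcal{K}_i}[y_i\in S_i\vert x_i]$, $P'_i=\mathbb{P}_{\mathcal{K}_i}[y_i\in S_i\vert x'_i]$ and $d_i=d(x_i,x'_i)$, one has both $P_i\le e^{\epsilon_i d_i}P'_i+\delta_i e^{d_i}$ and $P_i\le 1\le(1-\delta_i)+\delta_i e^{d_i}$, hence $P_i\le \min\left(1-\delta_i,\,e^{\epsilon_i d_i}P'_i\right)+\delta_i e^{d_i}$. Multiplying these capped bounds and expanding, the leading term takes the $e^{\epsilon_i d_i}P'_i$ branch of each minimum and yields $e^{\epsilon_1 d_1+\epsilon_2 d_2}P'_1P'_2$, while every term carrying a $\delta$-residual takes the $1-\delta_j$ branch, so no residual ever acquires an $e^{\epsilon_j d_j}$ factor; the additive part is then at most $\delta_1 e^{d_1}(1-\delta_2)+\delta_2 e^{d_2}(1-\delta_1)+\delta_1\delta_2 e^{d_1+d_2}\le(\delta_1+\delta_2)e^{d_1+d_2}$, with no condition whatsoever on $\epsilon_1,\epsilon_2$. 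If you prefer to keep your sequential style, the same device repairs it: never bound a probability by $1$ after it has been multiplied by an $e^{\epsilon d}$ factor; cap it at $1-\delta$ (or $1$) before that factor appears.
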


\begin{proof}
Let us simplify the notation and denote: $$P_i=\mathbb{P}_{\mathcal{K}_i}\left[y_i\in S_i\vert x_i\right]$$ $$P'_i=\mathbb{P}_{\mathcal{K}_i}\left[y_i\in S_i\vert x'_i\right]$$ 
$$\tilde{\delta}_i=\delta_i\,e^{d(x_i,x'_i)}$$
for $i\,\in\,\{1,2\}$.
As mechanisms $\mathcal{K}_1$ and $\mathcal{K}_2$ are applied independently, we have:
\begin{align}
    \mathbb{P}_{\mathcal{K}_1,\mathcal{K}_2}\left[(y_1,y_2)\,\in\, S_1\times S_2\vert (x_1,x_2)\right]=P_1.P_2 \label{eq:simp1}\\
    \mathbb{P}_{\mathcal{K}_1,\mathcal{K}_2}\left[(y_1,y_2)\,\in\, S_1\times S_2\vert (x'_1,x'_2)\right]=P'_1.P'_2 \label{eq:simp2}
\end{align}
Therefore, we obtain:
\begin{align}
     \mathbb{P}_{\mathcal{K}_1,\mathcal{K}_2}\left[(y_1,y_2)\,\in\, S_1\times S_2\vert (x_1,x_2)\right]=P_1.P_2\nonumber\\
     \leq\left(\min\left(1-\delta_1,e^{\epsilon_1\,d(x_1,x'_1)}P'_1\right)+\tilde{\delta}_1\right)\nonumber\\
     \times\left(\min\left(1-\delta_2,e^{\epsilon_2\,d(x_2,x'_2)}P'_2\right)+\tilde{\delta}_2\right)\nonumber\\
     \leq m_1m_2+\tilde{\delta}_1m_2+m_1\tilde{\delta}_2+\tilde{\delta}_1+\tilde{\delta}_2\nonumber\\
     \left[\text{where }m_i=\min\left(1-\delta_i,e^{\epsilon_i\,d(x_i,x'_i)}P'_i\right)\right]\nonumber\\
     \leq e^{\epsilon_1\,d(x_1,x'_1)+\epsilon_2\,d(x_2,x'_2)}P'_1P'_2\nonumber\\
     +\tilde{\delta_1}-\tilde{\delta_1}\tilde{\delta_2}+\tilde{\delta_2}-\tilde{\delta_1}\tilde{\delta_2}+\tilde{\delta_1}\tilde{\delta_2}\nonumber\\
     \leq e^{\epsilon_1\,d(x_1,x'_1)+\epsilon_2\,d(x_2,x'_2)}\nonumber\\
     \times\mathbb{P}_{\mathcal{K}_1,\mathcal{K}_2}\left[(y_1,y_2)\,\in\,
     S_1\times S_2\vert (x'_1,x'_2)\right]\nonumber\\
     +\left(\delta_1+\delta_2\right)\,e^{d(x_1,x'_1)+d(x_2,x'_2)}\nonumber
\end{align}
\end{proof}

We now proceed to generalize the conventional planar Laplace mechanism~\cite{ChatzikokolakisElSalamounyPalamidessi+2017+308+328} to define the \emph{truncated Laplace mechanism} extended to a generic metric space. 

\begin{definition}[Truncated Laplace mechanism]\label{def:TruncatedLaplace}
The \emph{truncated Laplace mechanism} $\mathcal{L}$ on a space $\mathcal{X}$ equipped with, not necessarily symmetric, distance metric $d$ truncated to a radius $r$, is defined as:
\begin{align}\label{eq:TruncatedLaplace}
 \mathbb{P}_{\mathcal{L}}[y\vert x]
    =\begin{cases}
		    c\,e^{-\epsilon\,d(y,x)} &  \text{if }d(x,y)\leq r\\
            0 & \text{otherwise}
		\end{cases}
\end{align}
where $c$ is the truncated normalization constant defined such that $\int\limits_{y\in\mathcal{Y}}\mathbb{P}_{\mathcal{L}}[y\vert x]dy=1$, and $\epsilon$ is the desired privacy parameter.  Let us call $r$ to be the \emph{radius of truncation} for $\mathcal{L}$.
\end{definition}

Note: In case of a discrete domain $\mathcal{Y}$, $c$ is defined by normalizing $\sum\limits_{y\in\mathcal{Y}}\mathbb{P}_{\mathcal{L}}[y|x]=1$, and in this case $\mathcal{L}$ is an extended truncated geometric mechanism \cite{ghosh2009universally} extended to a generic metric space.

\begin{restatable}{lemma}{positivedelta}\label{lem:positivedelta}
    $\frac{e^{-\epsilon\,d(x_1,x_2)}\mathbb{P}\left[y\vert x_1\right]-\mathbb{P}\left[y\vert x_2\right]}{e^{(1-\epsilon)\,d(x_1,x_2)}}\,e^{r}\leq 1$, where $r$ is the radius of truncation for $\mathcal{L}$, as in \eqref{eq:approxGeoI}.
\end{restatable}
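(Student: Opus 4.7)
The plan is to prove the inequality by a case analysis on whether $y$ belongs to the support of the truncated Laplace centred at $x_1$ and at $x_2$ respectively. Writing $D = d(x_1,x_2)$ for brevity, the three relevant cases are: (i) $d(y,x_1)\le r$ and $d(y,x_2)\le r$; (ii) $d(y,x_1)>r$ (so $\mathbb{P}[y\mid x_1]=0$); and (iii) $d(y,x_1)\le r$ but $d(y,x_2)>r$ (so $\mathbb{P}[y\mid x_2]=0$).

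In cases (i) and (ii) I would show that the numerator $e^{-\epsilon D}\mathbb{P}[y\mid x_1]-\mathbb{P}[y\mid x_2]$ is nonpositive, which immediately yields that the whole LHS is $\le 0 \le 1$. Concretely, in case (i) both densities equal $c\,e^{-\epsilon d(y,\cdot)}$, and the (quasi-)triangle inequality $d(y,x_2)\le d(y,x_1)+D$ gives $\mathbb{P}[y\mid x_2]\ge e^{-\epsilon D}\mathbb{P}[y\mid x_1]$, exactly what is needed. Case (ii) is even simpler: the first term of the numerator vanishes by the support condition in Definition~\ref{def:TruncatedLaplace}, leaving $-\mathbb{P}[y\mid x_2]\le 0$.

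The hard part will be case (iii), in which $\mathbb{P}[y\mid x_2]=0$ while $\mathbb{P}[y\mid x_1]=c\,e^{-\epsilon d(y,x_1)}>0$. Substituting and simplifying reduces the LHS to $c\,e^{\,r-D-\epsilon d(y,x_1)}$, so the goal becomes $c\,e^{\,r-D-\epsilon d(y,x_1)}\le 1$. Here I would use the triangle inequality in the form $D\ge d(y,x_2)-d(y,x_1)>r-d(y,x_1)$ to obtain the chain $r-D-\epsilon d(y,x_1)< (1-\epsilon)\,d(y,x_1)\le (1-\epsilon)\,r$, thereby reducing the problem to the clean geometric statement $c\le e^{(\epsilon-1)\,r}$.

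The final step is the delicate one and is where the proof is most sensitive to the ambient space: I would unpack the explicit normalization constant $c$ from Definition~\ref{def:TruncatedLaplace} by bounding $\int_{d(y,x)\le r} e^{-\epsilon d(y,x)}\,dy$ (or its discrete counterpart) from below via the volume of the ball of radius $r$ in $(\mathcal{Y},d)$ and the pointwise bound $e^{-\epsilon d(y,x)}\ge e^{-\epsilon r}$ on that ball. This comparison, together with the rearrangement above, closes the argument in case (iii) and completes the proof.
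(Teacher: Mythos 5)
Your cases (i) and (ii) are sound, and together they are essentially the paper's entire argument: the paper substitutes $\mathbb{P}[y\mid x_i]=c\,e^{-\epsilon d(y,x_i)}$ for \emph{both} conditionals, applies the triangle inequality $d(y,x_2)\le d(y,x_1)+d(x_1,x_2)$ to conclude the numerator is nonpositive, and stops there; it never examines the truncation boundary, so it implicitly covers only your case (i) (your case (ii) being trivial). The genuine gap in your proposal is the closing step of case (iii). First, the chain $r-D-\epsilon d(y,x_1)<(1-\epsilon)d(y,x_1)\le(1-\epsilon)r$ already needs $\epsilon\le 1$. More importantly, the target $c\le e^{(\epsilon-1)r}$ is false in general and the proposed volume comparison cannot deliver it: bounding the normalization integral below by $e^{-\epsilon r}$ times the measure $\mu\bigl(\beta_r(x_1)\bigr)$ of the truncation ball only gives $c\le e^{\epsilon r}/\mu\bigl(\beta_r(x_1)\bigr)$, and nothing in Definition~\ref{def:TruncatedLaplace} forces $\mu\bigl(\beta_r(x_1)\bigr)\ge e^{r}$. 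In the plane with small $r$ one has $c\approx 1/(\pi r^2)\gg 1\ge e^{(\epsilon-1)r}$ for $\epsilon\le1$, and in a sparse discrete domain $c$ stays bounded below by a constant while $e^{(\epsilon-1)r}\to 0$ as $r$ grows (for $\epsilon<1$).

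In fact no argument can close case (iii) without extra hypotheses, because the inequality itself can fail there. Take a discrete domain $\{x_1,x_2,y\}$ on a line with $d(x_1,y)=r$, $x_2$ at distance $\eta$ from $x_1$ on the opposite side, so $d(x_2,y)=r+\eta>r$ and $D=\eta$. Then $\mathbb{P}[y\mid x_2]=0$, $c=1/\bigl(1+e^{-\epsilon\eta}+e^{-\epsilon r}\bigr)\ge 1/3$, and the left-hand side equals $c\,e^{(1-\epsilon)r-\eta}$, which exceeds $1$ for, e.g., $\epsilon=0.5$, $r=3$, $\eta$ small. So the lemma holds as stated only under the tacit assumption that the compared points satisfy $d(x_i,y)\le r$ for both $i$ (e.g., the truncation radius covers the whole region of interest, or one restricts to $y$ in the common support) — exactly the assumption the paper's proof makes silently by writing both probabilities in exponential form. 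Under that assumption your cases (i)--(ii) already complete the proof and your case (iii) does not arise; as written, however, your final step is a genuine gap rather than a fixable detail, and your case split has the merit of exposing the unstated assumption.
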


\begin{proof}
    \begin{align}
        \frac{e^{-\epsilon\,d(x_1,x_2)}\mathbb{P}\left[y\vert x_1\right]-\mathbb{P}\left[y\vert x_2\right]}{e^{(1-\epsilon)\,d(x_1,x_2)}}\,e^{r}\leq 1\nonumber\\
        \iff c\left(e^{-\epsilon d(x_1,x_2)+d(x_1,y)}-e^{-\epsilon d(x_2,y)}\right)\nonumber\\ 
        \leq e^{(1-\epsilon)d(x_1,x_2)-r}\label{eq:legaldelta}
    \end{align}
Now we observe that $d(x_1,x_2)+d(x_1,y)\geq d(x_2,y)$ due to the fact that $d$ is a metric and it satisfies triangle inequality. Immediately, we have $e^{-\epsilon d(x_1,x_2)+d(x_1,y)}-e^{-\epsilon d(x_2,y)}\leq0$ for any $\epsilon\in\mathbb{R}_{\geq 0}$. Therefore, as $c\geq 0$, \eqref{eq:legaldelta} is trivially satisfied as the RHS is always non-negative.
\end{proof}

\begin{restatable}{theorem}{truncatedLap}\label{th:truncatedLap}
$\mathcal{L}$ satisfies $(\epsilon,\delta)$-geo-indistinguishability where
$\delta=\max\left\{\max\limits_{\substack{S\subseteq \mathcal{Y}\\x_1,x_2\in\mathcal{X}}}\frac{e^{-\epsilon\,d(x_1,x_2)}\mathbb{P}\left[y\vert x_1\right]-\mathbb{P}\left[y\vert x_2\right]}{e^{(1-\epsilon)\,d(x_1,x_2)}},0\right\}$.
\end{restatable}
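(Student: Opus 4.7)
The plan is to prove the theorem by a direct algebraic rearrangement of the defining inequality \eqref{eq:approxGeoI} of AGeoI, and then to appeal to Lemma~\ref{lem:positivedelta} to verify the admissibility side-constraint $\delta\,e^{d(x,x')}\in[0,1]$ required by the AGeoI definition.

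First, I would fix arbitrary $x_1,x_2\in\mathcal{X}$ and an arbitrary $S\subseteq\mathcal{Y}$, and observe that the AGeoI inequality specialized to $\mathcal{L}$,
\[\mathbb{P}_{\mathcal{L}}[y\in S\mid x_1]\leq e^{\epsilon\,d(x_1,x_2)}\,\mathbb{P}_{\mathcal{L}}[y\in S\mid x_2]+\delta\,e^{d(x_1,x_2)},\]
is equivalent — by moving the exponentially-scaled term to the left and dividing through by the strictly positive factor $e^{\epsilon\,d(x_1,x_2)}$ (which preserves the direction of the inequality) — to
\[\frac{e^{-\epsilon\,d(x_1,x_2)}\mathbb{P}_{\mathcal{L}}[y\in S\mid x_1]-\mathbb{P}_{\mathcal{L}}[y\in S\mid x_2]}{e^{(1-\epsilon)\,d(x_1,x_2)}}\leq\delta.\]
Since this inequality must hold uniformly over every $S\subseteq\mathcal{Y}$ and every pair $x_1,x_2\in\mathcal{X}$, the minimal admissible value of $\delta$ is the supremum of the left-hand side over all such triples. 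Enforcing the non-negativity requirement $\delta\geq 0$ by taking the maximum with $0$ gives precisely the formula asserted in the theorem, so by construction $\mathcal{L}$ satisfies the AGeoI inequality with this $\delta$.

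Next, I would verify the admissibility constraint $\delta\,e^{d(x_1,x_2)}\in[0,1]$ that is implicit in the definition of $(\epsilon,\delta)$-geo-indistinguishability. The non-negativity $\delta\,e^{d(x_1,x_2)}\geq 0$ is immediate from the outer $\max\{\cdot,0\}$. For the upper bound, Lemma~\ref{lem:positivedelta} already controls the un-maximized ratio multiplied by $e^r$, where $r$ is the radius of truncation of $\mathcal{L}$; since the ratio is non-zero only when both $x_1$ and $x_2$ lie within distance $r$ of some $y\in\mathcal{Y}$ in the support, this bound transports to the required $\delta\,e^{d(x_1,x_2)}\leq 1$, closing the admissibility check.

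The main obstacle I anticipate is the careful handling of the truncation. The pointwise density $\mathbb{P}_{\mathcal{L}}[y\mid x]$ vanishes precisely when $d(x,y)>r$, so the ratio $\mathbb{P}_{\mathcal{L}}[y\mid x_1]/\mathbb{P}_{\mathcal{L}}[y\mid x_2]$ can be unbounded when $y$ lies inside the truncation ball around $x_1$ but outside the one around $x_2$. These asymmetric configurations are exactly what force $\delta>0$ and make the supremum non-trivial; inside the common support the triangle inequality for $d$ already gives a pure GeoI-style bound with $\delta=0$. Everything else in the proof is purely routine rearrangement, so the substantive content is contained in Lemma~\ref{lem:positivedelta}, and the theorem reduces to recognising the supremum expression as the tightest constant making \eqref{eq:approxGeoI} hold.
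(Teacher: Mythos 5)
Your proposal is correct and follows essentially the same route as the paper's own proof: recognise $\delta$ as the supremum of the rearranged defining inequality of AGeoI (so the bound holds by construction for all $S\subseteq\mathcal{Y}$ and $x_1,x_2\in\mathcal{X}$), and invoke Lemma~\ref{lem:positivedelta} to certify the admissibility constraint $\delta\,e^{d(x_1,x_2)}\in[0,1]$. The slight hand-waving in transporting the lemma's $e^{r}$ factor to $e^{d(x_1,x_2)}$ is present to the same degree in the paper's argument, so no substantive difference remains.
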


\begin{proof}
Trivially $\delta e^{d(x_1,x_2)}>0$ for any $x_1,x_2\in\mathcal{X}$ as $\delta>0$. Moreover, Lemma \ref{lem:positivedelta} ensures that  $\delta e^{d(x_1,x_2)}<1$.

Now observe that for every $S\subseteq\mathcal{Y}$ and for all $x_1,x_2\,\in\,\mathcal{X}$, we have:
\begin{align}
    e^{-\epsilon\,d(x_1,x_2)}\mathbb{P}_{\mathcal{L}}\left[y\vert x_1\right]-\mathbb{P}_{\mathcal{L}}\left[y\vert x_2\right]\leq \delta\,e^{(1-\epsilon)\,d(x_1,x_2)}\nonumber\\
    \implies \mathbb{P}_{\mathcal{L}}\left[y\vert x_1\right]-e^{\epsilon\,d(x_1,x_2)}\mathbb{P}_{\mathcal{L}}\left[y\vert x_2\right]
    \leq \delta\,e^{d(x_1,x_2)}\nonumber
\end{align}
\end{proof}

The explicit process of sampling private locations satisfying AGeoI from a given set of original locations through truncated Laplace mechanism on a discrete location space has been described in Algorithms \ref{alg:AGeoIMech} and \ref{alg:AGeoISamp}. 
\begin{algorithm}
\SetAlFnt{\small}
\textbf{Input:} Discrete domain of original locations: $\mathcal{X}$,  Discrete domain of private locations: $\mathcal{Y}$, Desired privacy parameter: $\epsilon$, Desired truncation radius: $r$;
\textbf{Output:} Channel $C$ satisfying \eqref{eq:TruncatedLaplace}\;
\SetKwFunction{Fmain}{DTLap}
\SetKwProg{Fn}{Function}{:}{}
\Fn{\Fmain{$\mathcal{X},\mathcal{Y},\epsilon,r$}}{ 
Set $C\leftarrow \text{empty channel}$\;
Set $Y\leftarrow \text{empty list}$\;
\For{$x \in \mathcal{X}$}
{
    $c_x=\frac{1}{\sum\limits_{\substack{y\in\mathcal{Y}\\d(x,y)\leq r}}e^{-\epsilon\,d(x,y)}}$ \;
    \For{$y \in \mathcal{Y}$}
    {  
        \eIf{$d(x,y)\leq r$}
        {$C[x,y]=0$}
        {$C[x,y]=c_x\,e^{-\epsilon\,d(x,y)}$} 
    }
}
\textbf{Return:} $C$\;
}
\caption{Discrete and truncated Laplace mechanism (DTLap)}\label{alg:AGeoIMech}
\end{algorithm}

\begin{algorithm}
\SetAlFnt{\small}
\textbf{Input:} Discrete domain of original locations: $\mathcal{X}$,  Discrete domain of private locations: $\mathcal{Y}$, Desired privacy parameter: $\epsilon$, Desired truncation radius: $r$; Vector of original locations: $X$\;
\textbf{Output:} Corresponding vector of private locations: $Y$\;
\SetKwFunction{Fmain}{DTLapSamp}
\SetKwProg{Fn}{Function}{:}{}
\Fn{\Fmain{$\mathcal{X},\mathcal{Y},\epsilon,r,X$}}{ 
$C=$ \Call{DTLap}{$\mathcal{X},\mathcal{Y},\epsilon,r$}\;
Set $Y\leftarrow \text{empty list}$\;
\For{$x \in X$}
{
    Randomly sample $y\in\mathcal{Y}\sim C[x,:]$\;
    Append $y$ to $Y$
}
\textbf{Return:} $Y$\;
}
\caption{Sampling private locations with DTLap (DTLapSamp)}\label{alg:AGeoISamp}
\end{algorithm}

\section{System Model}\label{sec:system_model}

This section details our privacy-preserving model for finding an optimal charging station in the Internet of Vehicles (IoV) as a use case of the proposed AGeoI technique. We begin with a discussion of the location privacy problems inherent in finding optimal CSs in the IoV. This is followed by road networking modelling, a description of the system architecture for differentially private location sharing, the trust relationship between system tiers, and the privacy threat model.

\subsection{Problem Statement}

EVs have been identified as a critical component of future sustainable transportation systems to reduce CO2 emissions and have garnered significant attention from academia and business~\cite{kumar2020adoption}. Due to the limited battery capacity, EVs may be required to visit CSs during journeys. It causes some drivers to have range anxiety, which is the vehicle has insufficient battery power to cover the travel needed to reach its intended destination. It is highlighted as one of the barriers to EVs' widespread adoption~\cite{bulut2017mitigating}. CSs can not always be readily available since it often takes a while to be sufficiently charged for EVs. A CS booking service can help to overcome range anxiety.

EVs may access such services through third-party providers to discover the nearest and readily available CSs to minimise charging wait times by using static or live location queries. However, location sharing raises privacy challenges, such as vehicle tracking. GeoI technique provides a formal privacy guarantee for location queries. However, it is not highly applicable to this use case for two reasons. It does not consider the feasible locations where a vehicle can be present, and it does not stop vehicle tracking in the case of linked queries during the vehicle trajectory. Thus, a tailored privacy-preserving mechanism is facilitated by combining the proposed AGeoI technique with dummy location generation. 

\subsection{Road Network Model}
Similar to~\cite{qiu2020location}, the road network $G$ is represented as a weighted directed graph $G=(N,E,W)$, where $N$ is the set of nodes, $E\subseteq N^2$ is the set of edges, and $W:N^2\rightarrow \mathbb{R}^{+}$ is the set of weights representing the minimum travelling distance between any two nodes. The nodes and edges correspond to junctions and road segments of the network, respectively. Each edge $e \in E$ is addressed by the pair of respective starting node, ending node, and a weight representing the travelling distance through that edge, i.e., $e=(N^s_e, N^e_e, w_e) \in N$, where the direction of the traffic is from $N^s_e$ to $N^e_e$ on $e$. For any $i\in N$ and $j\in N$, let the sequence of edges $(e_1,\ldots,e_r)$ denote a \emph{path} from node $i$ to node $j$ if $N^s_{e_1}=i$ and $N^e_{e_r}=j$. Hence, let $C(i,j)$ represent the set of paths that connect node $i$ to node $j$. Then $W$ is a $N\times N$ matrix, where 
\begin{equation*}
   W_{ij}=
   \begin{cases}
        \min\limits_{p\in C(i,j)}\sum\limits_{e\in p}w_e & \text{ if $C(i,j)\neq \phi$}\\
        \infty & \text{ otherwise}
   \end{cases}
\end{equation*}
Essentially $W_{ij}$ is the shortest travelling distance from node $i$ to node $j$ in the network. We shall address the quantity $W_{ij}$ as the \emph{traversal distance} between nodes $i$ and $j$ in the graph $G$ and denote it as $d_{G}(i,j)$ for every $(i,j)\,\in N^2$. Note that, as $G$ is a directed graph, $d_{G}$ may not be symmetric. 

\subsection{System Architecture}
\begin{figure}[ht!]
  \centering
  \includegraphics[width=0.48\textwidth]{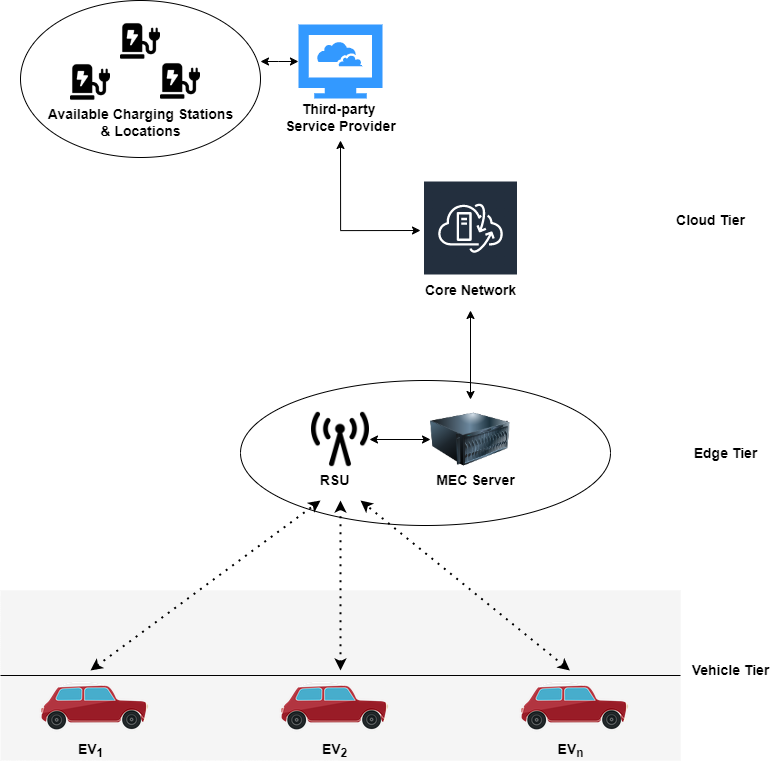}
  \caption{System Architecture (EV:Electric Vehicle, RSU: Roadside Unit, MEC: Mobile-Edge Computing Unit)}
  \label{fig:system_arch}
\end{figure}

IoV applications are transforming transportation systems by reducing human error, increasing travel convenience, and lowering energy, operational, and environmental costs~\cite{Duan2020emerging,ji2020Artificial}. Electric vehicles (EVs) have emerged as a viable technology for lowering carbon emissions and travel costs~\cite{patil2020grid}. However, range anxiety is one of the major challenges of their wide adoption. Vehicular location data can be utilised to optimise the vehicle charging plan and mitigate range anxiety. Third-party services such as Chargemap~\cite{chargemap} and PlugShare~\cite{plugshare} can recommend available and the closest CSs for the users. However, users are required to trust these third-party service providers to use these services, which presents significant privacy concerns in the honest-but-curious service provider threat model.


As the system architecture is depicted in Figure~\ref{fig:system_arch}, the vehicles are part of a three-tier cloud computing architecture and connected to Roadside Units (RSUs), MEC Server, and the Core Cloud Network through a secure communication channel, where these components are considered trusted. The Core Cloud Network enables the connection between the vehicle and third-party services, including charging station recommender, as the main consideration of this paper. However, it is not possible to ensure fully trusted third-party services that they will not utilise vehicular location data for further objectives. Thus, the honest-but curious threat model is considered for the third-party service provider and only privatised vehicular location data is shared in our proposed architecture. The roles of the system components are described in the following.

\subsubsection{Vehicle Tier}
In this paper, we fix a road network $G$ with nodes $G(N)$ and edges $G(E)$. We choose an arbitrary edge $I\in G$, and focus on the queries made by the EVs in $I$'s range of coverage, $R(I)$, provided by its RSU. When an EV moves from the area of coverage of one Edge cloud to another, we can assume the queries and the privacy threats against the Edge to reset as each Edge communicates with the Cloud-based services and the third-party service providers. 



\subsubsection{Edge Tier} Given the large volume of data generated and shared between vehicles and infrastructure, installing Edge cloud close to the vehicles are needed to host the off-board vehicular services, which require a low access latency from the onboard vehicular services~\cite{Lee2018exploring}. Along with essential data processing and forwarding functions, the Edge tier provides another layer for data aggregation and deploying additional privacy-preserving measures before sharing the data with any third-parties. 

\subsubsection{Cloud Tier} It is expected to provide computation and storage capabilities for top-level processes, including data sharing interfaces for third-party services. 

\subsubsection{Third-party Service Provider} It is the external party to ITS and is expected to enhance the quality of the function for finding the available CSs for the vehicles by receiving search queries compromised of privatised and dummy location vectors for the respective vehicles.

\subsubsection{Communication Channel} ITS comprises a network of RSU, vehicle on-board electronic control units (ECU), and distributed cloud computing and storage services. Wireless communications are enabled for V2V (Vehicle-to-Vehicle), V2I (Vehicle-to-Infrastructure) and V2X(Vehicle-to-Everything) by the technologies such as IEEE 802.11p DSRC/WAVE (Dedicated Short Range Communication/Wireless Access in Vehicular Environments), cellular advances such as C-V2X, and the long-term evolution for vehicles (LTE-V)~\cite{seo2016lte}. Confidentiality of the wireless communication channel is secured by public key infrastructure (PKI) encryption methods which is beyond the scope of this work. 

\subsection{Privacy Threat Landscape}

In real-time IoV location-based applications, vehicle users who wish to receive services tailored to their current locations need to share their location with the service provider. Different applications may have varying needs for data sensitivity. Data perturbation techniques are often used in private data sharing by introducing uncertainty in the data, which entails a trade-off in terms of the related application's quality of service in general. For example, EV users who want to locate the nearest available charging station during travel can send a query with their current location to a service provider using the Intelligent Transportation Systems (ITS) infrastructure. 


We consider the system into three categories: (i) the vehicle users (data subject), (ii) ITS infrastructure, including the MEC and Cloud Tiers (data controller and data processor), and (iii) the third-party that receives the privatised data from the deployed deployment privacy-preserving mechanisms. The third-party is assumed to be an EV charging management system, which may operate in a registration-based approach for a specific area.


The vehicular location data may also be subject to unauthorised use, data inference, retention or disclosure besides its primary use. Thus, a privacy-preserving mechanism can be utilised for location data sharing to provide a privacy guarantee while the users can receive the same or similar quality of service. Thus, the third-party provider is considered an honest-but-curious adversary model, which assumes it is honest in accurately executing the protocol required to provide location data. However, it may be curious to infer users' private information based on the obtained location data~\cite{paverd2014modelling}.


In this work, we mainly address two notable sources of threat that might put the EVs' privacy at risk:

\textbf{Location identification:} It is essential that the locations of the queries made by the EVs are protected from being identified. For the use case of location CSs, it is essential that the privatized version of the true location of the EV is within a certain radius of interest w.p. 1, making sure that the reported location is within a feasible and drivable distance away, and most importantly, within the area of coverage of the Edge where its true location lies. Therefore, we defined AGeoI as an extension of GeoI, which is the standard for location privacy. Thus, to ensure the privacy of the position of any given query in the road network, the EVs locally obfuscate their true locations using the truncated Laplace mechanism with their desired parameter $\epsilon$ and the radius of truncation $r$, which, in turn, decide the value of $\delta$.

\textbf{Journey tracing:} 
It is often the case that a certain EV might enquire about the nearest available charging station, but choose to proceed with the journey at that point of time, and raise further queries along the journey. In our model, we capture this realistic setting by allowing multiple queries to be made by the EV in a given journey. This immediately leads to a threat of approximately tracing the trajectory of the journey of an EV by interpolating the locations of the query, despite having each individual location being AGeoI. 

This is due to the fact that the obfuscated location of each query is not distinguishable from the real location, but they are not too far off from each other with a very high probability. Therefore, if the number of queries made in a single journey is large, it could be fairly straightforward to approximately deduce the trajectory of the journey. 

Cunningham et al.~\cite{Cunningham2021Trace} proposed a mechanism to securely share trajectories under LDP. However, the authors in \cite{Cunningham2021Trace} assumed a model of offline sharing of the entire trajectory and, hence, sanitising it with the proposed mechanism to engender a LDP guarantee. In our setting, this method cannot be directly implemented as we are working in a dynamic environment where the queries made by the EVs are in real time, with the server not having any prior knowledge of the number or the location of the queries made by a certain EV. Therefore, the mechanism of \cite{Cunningham2021Trace} cannot trivially be extended in the online location-sharing environment, and hence, the threat of adversaries able to reconstruct the journey of a particular EV with a high number of queries remains as a concern. 

\begin{figure*}[ht!]
  \centering
  \includegraphics[width=0.9\textwidth]{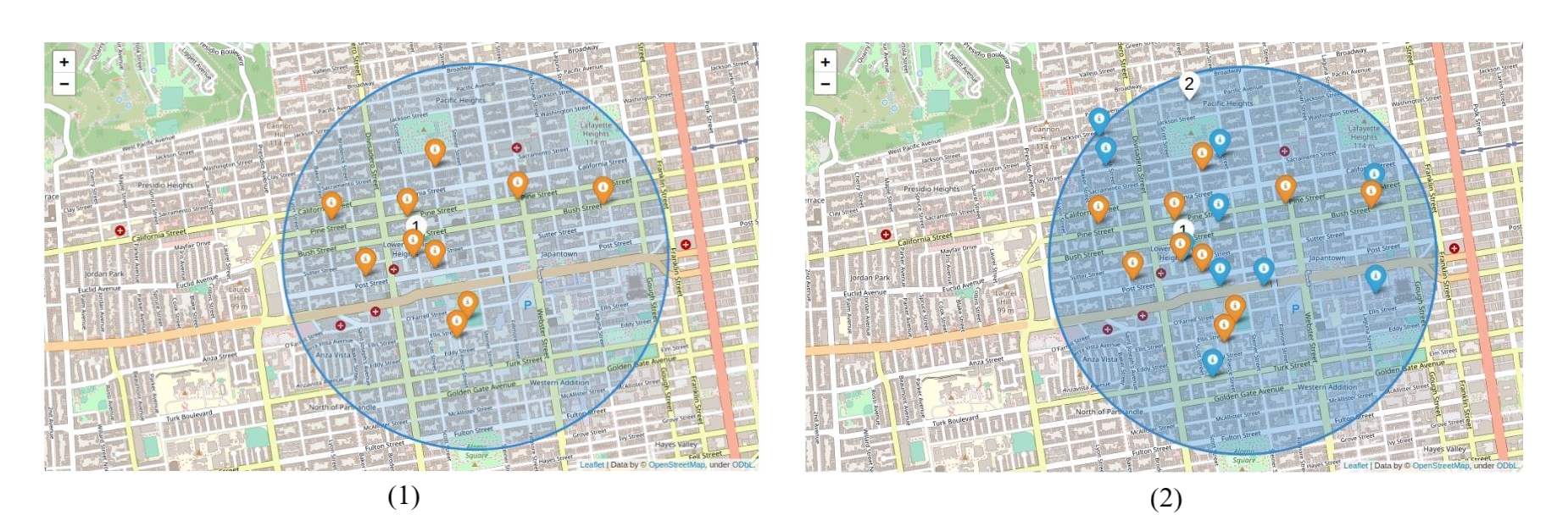}
  \caption{Reported dummy and privatised locations for two respective time windows (White Pins: Privatised locations, Orange Pins: Dummy locations in $1^{\text{st}}$ Time window, Blue Pins: Dummy locations in $2^{\text{nd}}$ Time window)}
  \label{fig:dummylocations}
\end{figure*}

\begin{figure}[ht!]
  \centering
  \includegraphics[width=0.37\textwidth]{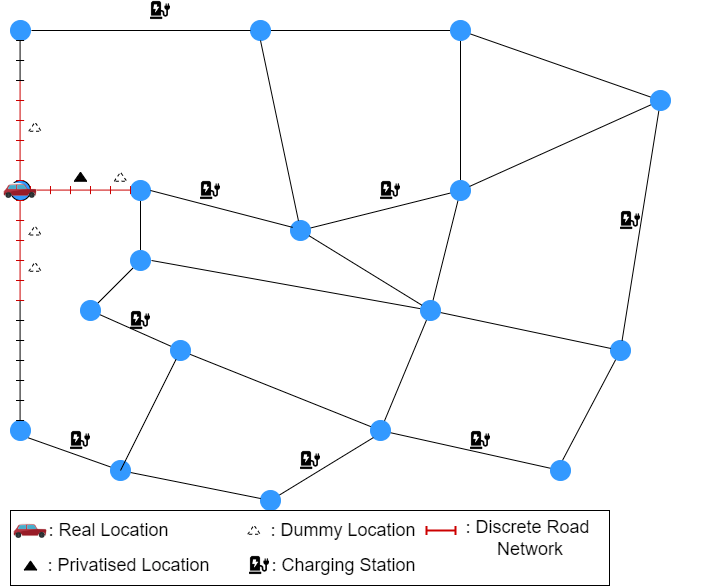}
  \caption{A toy example for a static location query on discrete road network}
  \label{fig:toy1}
\end{figure}

\begin{figure}[ht!]
  \centering
  \includegraphics[width=0.37\textwidth]{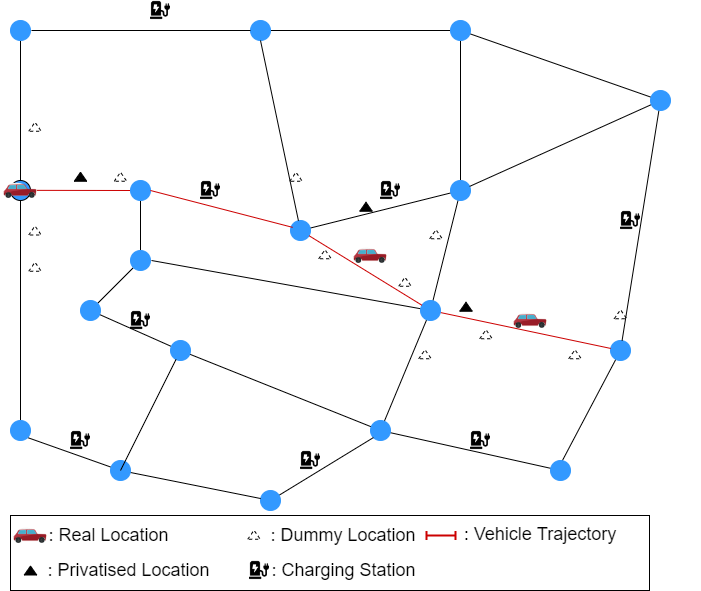}
  \caption{A toy example for linked 3 location queries on discrete road network}
  \label{fig:toy2}
\end{figure}

\subsection{Proposed Method}

At every query made along the journey, an EV $u$ located within the coverage of an Edge $I$ locally obfuscates its true location $x^u\in R(I)$ to $x^u_1\in R(I)$ using the truncated Laplace mechanism, guaranteeing AGeoI, and generate $m-1$ feasible dummy locations $\{x^u_2,\ldots,x^u_m\}\in R(I)^{m-1}$, i.e., locations that cannot be trivially identified as being artificially generated given the query of the previous time stamp w.r.t. realistic speed limits, travelling conditions, etc. For the first query that $u$ makes along its journey, it can generate any random $m-1$ dummy locations in $R(I)$. Thus, each CS query of $u$ consists of reporting the vector of $m$ locations, $l_u=\left(x^u_1,\ldots,x^u_m\right)\in\R(I)^m$, to $I$ for the Edge to process and communicate the query to the Cloud services and the third-parties to find the nearest available CSs in $R(I)$ for $u$. This inductively ensures that the adversary will have at least $m$ possible trajectories that the EV could have realistically followed at every time stamp, making it highly improbable for the Edge and the third-party to be able to conclude which of them was the actual journey as, after $k$ queries made along a single journey, each interpolated trajectory will have a probability of at least $1/m^k$ of being the real one.

Figure~\ref{fig:dummylocations} shows an example of reported $10$ dummy locations with the privatised location for two respective time windows, where the dummy locations in the following time window can have a feasible link at least one of the preceding dummy locations. Toy examples are depicted on Figures~\ref{fig:toy1} and \ref{fig:toy2} as more simplistic explanations of the proposed privacy-preserving mechanism, where the first represents a static query on an exampler discrete road network, and the following one represents linked dynamic queries on the same exampler discrete road network.
 
At any time, the Edge collects all the reported locations from the querying EVs, shuffles them by effacing the links between the location vectors and the corresponding EVs, and sends this jumbled collection of all the reported locations in the network for knowing their respective nearest available CSs to the third-party service provider. After receiving the response, the Edge, which internally keeps the record of the IDs of the EVs against their queried locations, assigns the corresponding vector of locations of the nearest available CSs to each EV and communicates them back to the respective vehicles. 

In other words, at time $t$, if the Edge receives the location vectors from $k_t$ querying EVs as $\mathcal{L}(t)=\{l_{u_1},\ldots, l_{u_{k_t}}\}$, the Edge is responsible for shuffling all the individual location points in these reported vectors and forward the scrambled collection of locations $\mathcal{L}'(t)=\{x^u_i\colon u\in\{u_1,\ldots,u_{k_t}, i\in[m]\}$ to the Cloud and the third-party, while internally keeping a track of the IDs of the EVs to reconnect the query-response back to the corresponding users. Setting $\hat{x}$ as the location of the nearest available charging station from location $x$ in $R(I)$, the Edge receives $\mathcal{R}(t)=\{\hat{x}^u_i\colon u\in\{u_1,\ldots,u_{k_t}, i\in[m]\}$ as the response from the third-party service provider handling the CS data real-time. After this, matching the stored IDs of the EVs with the locations of the CSs, the Edge communicates the response vector $\hat{l}_u=(\hat{x}^u_i:i\in [m])$ back to the corresponding EV $u$. Then the EV can choose to navigate to $\argmin_{x\in\hat{l}_u}\{d_G(x,x_u)\}$, where $x_u$ is the real location of EV $u$. The overview of this mechanism is explained in Figure~\ref{fig:system_arch}.

\section{Cost of privacy analysis} \label{sec:cost_of_privacy_analysis}

\begin{definition}[Cost of privacy]\label{def:CoP}
Suppose an EV $u$ at location $x^u$ chooses to locally obfuscate its real location of query as $x^u_1$ using the truncated Laplace mechanism $\mathcal{L}_{\epsilon,r}$ satisfying $(\epsilon,\delta)$-geo-indistinguishability with a corresponding radius of truncation $r$. Then we define the \emph{cost of privacy (CoP)} of EV $u$ as $\operatorname{CoP}(u,\mathcal{L}_{\epsilon,r})=\vb{c}(x^u,\hat{x}^u_1)-\vb{c}(x^u,\hat{x}^u)$, where $\hat{x}^u$ and $\hat{x}^u_1$ are the nearest available CSs in the network to $x^u$ and $x^u_1$, respectively, and $\vb{c}:G(N)^2\mapsto \mathbb{R}^+$ is any cost function that reflects the cost of commute from locations $x$ to $y$ in the network. 
\end{definition}

In other words, CoP, as in Definition \ref{def:CoP}, essentially captures the \emph{extra} cost that an EV needs to cover as a result of the privatized location it reports to the Edge satisfying AGeoI, as opposed to its true location. For the purpose of this paper and simplicity of the analysis, we considered the cost function as the travelling distance in the network, i.e., $\vb{c}=d_G$. However, in practice, any suitable cost function could be used (e.g. fuel efficiency, time, etc.) could be used as $\vb{c}$, depending on the context and requirement of the architecture.

To formally characterize and analyze the CoP of the EVs in the network, inspired from the classical version of \emph{Voronoi decomposition}, we extend the concept in the setting of our road network in the network coverage for a fixed Edge w.r.t. graph-traversal distance, $d_g$.

\begin{definition}[Voronoi decomposition]\label{def:Voronoi}
Let $G$ be the graph representing the road network equipped with travelling distance $d_G$. Let the set of CSs in $G$ be $C_G=\{c_1,\ldots,c_{n_G}\}$. Then the \emph{Voronoi decomposition} on $G$ w.r.t. $C_G$ is defined as $\vb{V}_G=\{V_i\colon\,i\in[n_G]\}$ such that $V_i\cap V_j=\phi$ for any $i\neq j$ and $\bigcup\limits_{i\in[n_G]}V_i=G$, where
\small
\begin{equation*}
  V_i=\{x\in G\colon\,d_G(x,c_i)\leq d_G(x,c_j)\,\forall\,j\in[n_G],j\neq i\}  
\end{equation*}
\par
\end{definition}

\begin{definition}[Closed ball around a location]\label{def:ClosedBall}
For any $x \in G$ and $r\in\mathbb{R}_{\geq 0}$, the \emph{closed ball} of $x$ of radius $r$ is defined as $\beta_r(x)=\{y\in G\colon\,d_G(x,y)\leq r\}$
\end{definition}

\begin{definition}[Fenced Voronoi decomposition]\label{def:FenceVoronoi}
For any $r\in\mathbb{R}_{\geq 0}$ and charging station $i$, let the \emph{$r$-fenced Voronoi decomposition} on road network $G$ be defined as $V^{-r}_G=\{V_i^{r}\colon\,i\in[n_G]\}$ such that  $V^{-r}_i\cap V^{-r}_j=\phi$ for $i\neq j$ and $V_i^{-r}=\{x\in V_i \colon\, B_r(x)\subseteq V_i\}$. In other words, $V_i^{-r}$ essentially constructs an area contained within $V_i$ restricted by a \emph{fence} at a distance $r$ from the edge of $V_i$. 
\end{definition}

\begin{restatable}{theorem}{zeroCoP}\label{th:zeroCoP} Suppose an EV $u$ positioned at $x^u$ on $G$ obfusucates its location using AGeoI with any radius of truncation $r\in\mathbb{R}_{\geq 0}$. Let $\hat{x}^u$ be the location of the nearest available charging station to the true location $x^u$. Then $\mathbb{P}\left[\operatorname{CoP}(u,\mathcal{L}_{\epsilon,r})=0\right]=1$ for every $x^u\in V_{\hat{x}^u}^{-r}$. In other words, if an EV lies in the $r$-fenced Voronoi decomposition for its nearest available CS, it has a \emph{zero cost for privacy} w.p. 1. 
\end{restatable}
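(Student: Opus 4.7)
The plan is to argue that, under the hypothesis $x^u\in V_{\hat{x}^u}^{-r}$, every realisation of the obfuscated query $x^u_1$ drawn from $\mathcal{L}_{\epsilon,r}$ has the same nearest available charging station as $x^u$, so the extra travelling distance collapses to zero deterministically and $\mathbb{P}[\operatorname{CoP}(u,\mathcal{L}_{\epsilon,r})=0]=1$ follows at once. The argument is essentially a chaining of the three definitions already on the page (truncated Laplace, closed ball, and fenced Voronoi), so I would keep it short.

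First, I would appeal to the support of the truncated Laplace channel: by Definition~\ref{def:TruncatedLaplace}, $\mathbb{P}_{\mathcal{L}}[y\vert x^u]=0$ whenever $d_G(x^u,y)>r$, hence any sample satisfies $x^u_1\in\beta_r(x^u)$ almost surely. Next, I would invoke the hypothesis that $x^u$ lies in the $r$-fenced Voronoi cell $V_{\hat{x}^u}^{-r}$: by Definition~\ref{def:FenceVoronoi} this is precisely the statement $\beta_r(x^u)\subseteq V_{\hat{x}^u}$. Chaining the two inclusions gives $x^u_1\in V_{\hat{x}^u}$ almost surely, and Definition~\ref{def:Voronoi} then says that the nearest charging station to $x^u_1$ is $\hat{x}^u$ itself, i.e.\ $\hat{x}^u_1=\hat{x}^u$. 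Substituting into the definition of CoP (with $\vb{c}=d_G$ as adopted in this section) yields $\operatorname{CoP}(u,\mathcal{L}_{\epsilon,r})=d_G(x^u,\hat{x}^u_1)-d_G(x^u,\hat{x}^u)=0$ almost surely, which is the claim.

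\textbf{Main obstacle.} There is no computational step to grind through; the proof is a straightforward unfolding of definitions. The one subtlety I would flag is the implicit identification between the Voronoi decomposition of Definition~\ref{def:Voronoi} (defined over the full set of stations $C_G$) and the notion of \emph{available} CS used in the statement of CoP. I would therefore read $\vb{V}_G$ as the decomposition induced by the currently available stations, so that $\hat{x}^u$ in the CoP and the cell $V_{\hat{x}^u}$ in the hypothesis refer to the same object; without this alignment the conclusion would only be that $x^u_1$ shares the same unrestricted nearest CS as $x^u$, not the same nearest \emph{available} CS. A second, minor point is that $d_G$ need not be symmetric, so the closed ball $\beta_r(x^u)$ must be interpreted consistently as $\{y\colon d_G(x^u,y)\leq r\}$ in both the Laplace support (distances measured outward from $x^u$) and the fence in $V_{\hat{x}^u}^{-r}$; provided both definitions adopt the same orientation, the inclusion $\beta_r(x^u)\subseteq V_{\hat{x}^u}$ transfers verbatim and the argument goes through.
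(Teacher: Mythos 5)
Your proposal is correct and follows essentially the same route as the paper, which simply declares the result ``immediate from Definition~\ref{def:FenceVoronoi}''; you merely spell out the chain (truncated-Laplace support $\subseteq \beta_r(x^u) \subseteq V_{\hat{x}^u}$, hence $\hat{x}^u_1=\hat{x}^u$ and $\operatorname{CoP}=0$ almost surely) that the authors leave implicit. The two caveats you flag (reading the Voronoi cells w.r.t.\ \emph{available} stations and fixing a consistent orientation of $d_G$ in the ball and the fence) are sensible clarifications rather than deviations.
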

\begin{proof}
Immediate from Definition~\ref{def:FenceVoronoi}.
\end{proof}

\begin{restatable}{theorem}{zeroCoPProb}\label{th:zeroCoPProb} Suppose an EV $u$ lies in $V_{\hat{x}^u}\setminus V_{\hat{x}^u}^{-r}$ and it uses AGeoI to obfuscate its true location $x^u$ to $x^u_1$ with a radius of truncation $r$ and privacy parameter $\epsilon$ for making a private query to the Edge. Then $\mathbb{P}\left[\operatorname{CoP}(u,\mathcal{L}_{\epsilon,r})=0\right]=1-\sum_{x^u_1\in V_{\hat{x}^u}^c}ce^{-\epsilon d_G(x^u,x^u_1)}$, where $c$ is the normalizing constant of the truncated Laplace mechanism as in Definition \ref{def:TruncatedLaplace}.
\end{restatable}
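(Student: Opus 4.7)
The plan is to reduce the event $\{\operatorname{CoP}(u,\mathcal{L}_{\epsilon,r})=0\}$ to a geometric event about $x^u_1$ and the Voronoi cell $V_{\hat{x}^u}$, and then compute the probability of this event directly from the PMF of the truncated Laplace mechanism. First I would unfold the definition of CoP with the chosen cost function $\vb{c}=d_G$: $\operatorname{CoP}(u,\mathcal{L}_{\epsilon,r})=d_G(x^u,\hat{x}^u_1)-d_G(x^u,\hat{x}^u)$. Because $\hat{x}^u$ is the nearest available CS to $x^u$, this expression is nonnegative and equals $0$ exactly when $\hat{x}^u_1=\hat{x}^u$ (modulo the deterministic tie-breaking already baked into the Voronoi decomposition of Definition~\ref{def:Voronoi}, which enforces $V_i\cap V_j=\phi$).

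Next I would invoke the defining property of the Voronoi cells: the nearest CS to $x^u_1$ is $\hat{x}^u$ iff $x^u_1\in V_{\hat{x}^u}$. Hence the event $\{\operatorname{CoP}=0\}$ coincides with the event $\{x^u_1\in V_{\hat{x}^u}\}$, giving
\begin{equation*}
\mathbb{P}\left[\operatorname{CoP}(u,\mathcal{L}_{\epsilon,r})=0\right]=1-\mathbb{P}\left[x^u_1\in V_{\hat{x}^u}^c\,\vert\,x^u\right].
\end{equation*}
Then I would plug in the PMF of the discrete truncated Laplace mechanism from Definition~\ref{def:TruncatedLaplace}, namely $\mathbb{P}_{\mathcal{L}}[x^u_1=y\vert x^u]=c\,e^{-\epsilon d_G(x^u,y)}$ when $d_G(x^u,y)\leq r$ and $0$ otherwise. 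Summing over $y\in V_{\hat{x}^u}^c$ yields exactly the claimed expression (where it is understood that points with $d_G(x^u,y)>r$ contribute $0$, so the summation is effectively restricted to the truncation ball).

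The main obstacle, which is actually a subtlety rather than a deep step, is making sure the equivalence $\{\operatorname{CoP}=0\}=\{x^u_1\in V_{\hat{x}^u}\}$ is tight: the converse direction requires noting that if $x^u_1\notin V_{\hat{x}^u}$, then its nearest CS $\hat{x}^u_1\neq\hat{x}^u$, and therefore $d_G(x^u,\hat{x}^u_1)>d_G(x^u,\hat{x}^u)$ strictly (again up to ties already resolved by the decomposition). The hypothesis $x^u\in V_{\hat{x}^u}\setminus V_{\hat{x}^u}^{-r}$ guarantees that the truncation ball $\beta_r(x^u)$ actually intersects $V_{\hat{x}^u}^c$, so the probability being subtracted is genuinely nonzero, making the statement nontrivial as a complement to Theorem~\ref{th:zeroCoP}.
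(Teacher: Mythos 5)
Your proposal is correct and follows essentially the same route as the paper: the paper's (one-line) proof likewise identifies the event $\{\operatorname{CoP}=0\}$ with the reported location $x^u_1$ falling inside $V_{\hat{x}^u}$ and then subtracts the truncated-Laplace mass on $V_{\hat{x}^u}^c$. Your additional remarks on tie-breaking and on the truncation ball meeting $V_{\hat{x}^u}^c$ simply make explicit what the paper leaves implicit.
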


\begin{proof}
To compute $\mathbb{P}\left[\operatorname{CoP}(u,\mathcal{L}_{\epsilon,r})=0\right]$, we only need to exclude the possibilities where the reported location of the EV lies outside the Voronoi decomposition of the station $\hat{x}^u$, which, essentially, is $1-\sum_{x^u_1\in V_{\hat{x}^u}^c}ce^{-\epsilon d_G(x^u,x^u_1)}$.
\end{proof}

\section{Experimental Study}\label{sec:experiments}


A series of experiments are conducted to evaluate the proposed method for the use case of an EV querying for a CS in Edge cloud-assisted ITS. The experiments' goals are to (i) validate proposed theoretical claims and solutions empirically; (ii) use the method to find the nearest available charging station for EVs as a case study; (iii) investigate the cost of privacy in real-time settings; and (iv) conduct a real-time CS occupancy prediction technique from the noisy vehicle distribution. Standard Python packages are used to run the experiments in an environment with an Intel core i7 processor, 16 GB of RAM, and an Ubuntu 20.04 operating system. 

\subsection{Dataset Preparation}

The road network, exported from OpenStreetMap~\cite{OpenStreetMap}, is utilised in this study. The cost of privacy is analysed by calculating the extra routing distance to the identified optimal charging station due to additional noise in vehicular locations in the queries. Thus, the cost of privacy depends on the sparsity of the CSs. Two datasets with different densities are prepared to comprehend the impact of this sparsity. The United States Department of Energy allows for the download of up-to-date data on existing and planned alternative fuel stations, as well as their geographical coordinates~\cite{chargingstations}. The first dataset is exported, containing the locations of 404 existing CSs in San Francisco. For the second dataset, we followed the assumption that CSs will be more widely spread in the future, and parking locations are likely to deploy charging facilities. Thus, the second dataset is prepared by merging the existing and planned charging station locations with on-street and off-street parking locations provided by DataSF~\cite{chargingstations2}. The second dataset contains 716 independently distributed locations.

The EPFL (Swiss Federal Institute of Technology-Lausanne) mobility dataset contains the GPS records of the 536 taxi trajectories in San Francisco for four weeks. The data format includes a taxi identifier, latitude, longitude, state of the taxi (vacant or occupied), and a UNIX epoch timestamp~\cite{piorkowski2009crawdad}. It is possible to split the entire taxi trajectory into realistic trajectories of each customer because the dataset contains information on the occupancy of the taxis. By doing so, we could export over 450 thousand trajectories. Among those, 536 trajectories are randomly chosen from each taxi for this study. 

\subsection{Experimental Setup}

In the experimental scenario, a group of vehicles sends out location queries to find the closest available CSs during their journeys on the road network $G$. The edges of the road network $G$ are truncated into discrete segments with an equal $k$ travel distance, similar to the work in~\cite{qiu2020location}. DTLap (e.g., Algorithm 1) is utilised to generate the privacy channel by using the Laplace mechanism for the user's desired values for privacy budget $\epsilon$ and truncation radius $r$. Following this, DTLapSamp (e.g., Algorithm 2) is used to generate privatised locations with respect to the users' real locations. 

A location query contains a privatised location and $m-1$ dummy locations as a vector and is collected by the Edge for sending them to the third-party through the core cloud as a single vector of all locations. The third-party responds to the locations in the vector with the closest available CS for each, and the Edge sends vehicle location vectors to the related vehicles without being able to differentiate privatised and dummy locations.

For IBU to approximate the original distribution of the query locations of the EVs in the road network in order to predict the availability of the CSs and, thus, assist the users in planning their journeys appropriately, we note that each original query location goes through two independent steps of sanitization: a) locally using the truncated Laplace mechanism to achieve AGeoI and b) generating the realistic dummy locations in the area of coverage of the Edge to ensure protection against attacks reconstructing their journeys. Setting the domain $\mathcal{X}$ as the area of coverage of the RSU of the fixed Edge that we focus on, while the former is a straightforward use of the channel $\mathcal{L}$, the latter can be thought of as $m-1$ independent applications of the uniform channel $\mathcal{U}$, where $U:\mathcal{X}^2\mapsto \mathbb{R}$ with $\mathcal{U}_{x,y}$ denoting $\mathbb{P}_{\mathcal{U}}[y|x]=1/|\mathcal{X}|$, by each EV. Therefore, after accounting for the normalization, the channel incorporating the local obfuscation and the generation of the dummy locations used by each EV reduces down to $\frac{1}{m}\mathcal{L}+\frac{m-1}{m}\mathcal{U}$ which we use as the privacy channel to implement IBU.

The first set of experiments examines the CoS for randomly selected 536 vehicle traces, where each trace contains a series of GPS coordinates and 3 randomly selected points along each for the real locations of the queries. The discrete road network is generated by setting the distance $k=100$ meters. The parameters of $\epsilon$ and $r$ are varied in the range of 0.2 to 2, and 1 to 20, respectively.

The privatised location, together with the dummy locations, is sent to the third-party for a query to prevent the third-party from tracking the vehicle. The area of Edge coverage, rather than the vehicle's area of interest, is considered for dummy location generation rather than the vehicle's area of interest, as the centre of mass may give away the true location. The second set of experiments examined the impact of dummy locations on the CoS. 

The location queries could be used for real-time predictive analysis on the optimisation of the smart power grid, managing staff, and determining where new CSs should be deployed. Thus, service providers can have the utility of the datasets (e.g., train ML models, etc.) with DP-based methods while the privacy of individuals is preserved. The third set of experiments utilises the IBU method to retrieve the true distribution of locations queries from the noisy distribution, which includes privatised and dummy locations.



\begin{figure*}[t]
  \centering
  \includegraphics[width=0.78\textwidth]{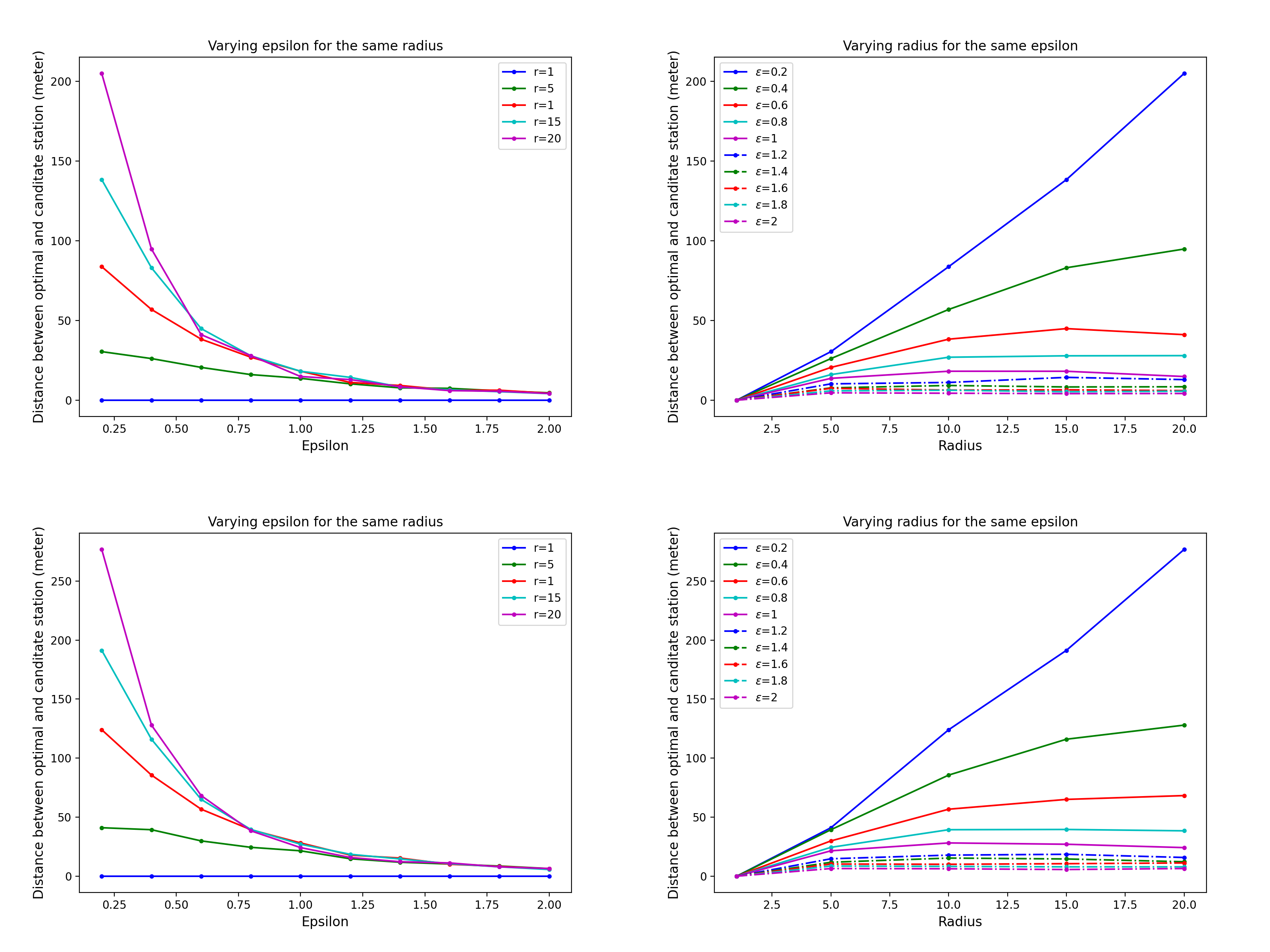}
  \caption{CoS for varying $\epsilon$ or $r$ of AGeoI ($1st$ row is for sparse CSs, $2nd$ row is for dense CSs)} 
  \label{fig:cost_of_privacy}
\end{figure*}

\begin{figure*}[ht!]
  \centering
  \includegraphics[width=1\textwidth]{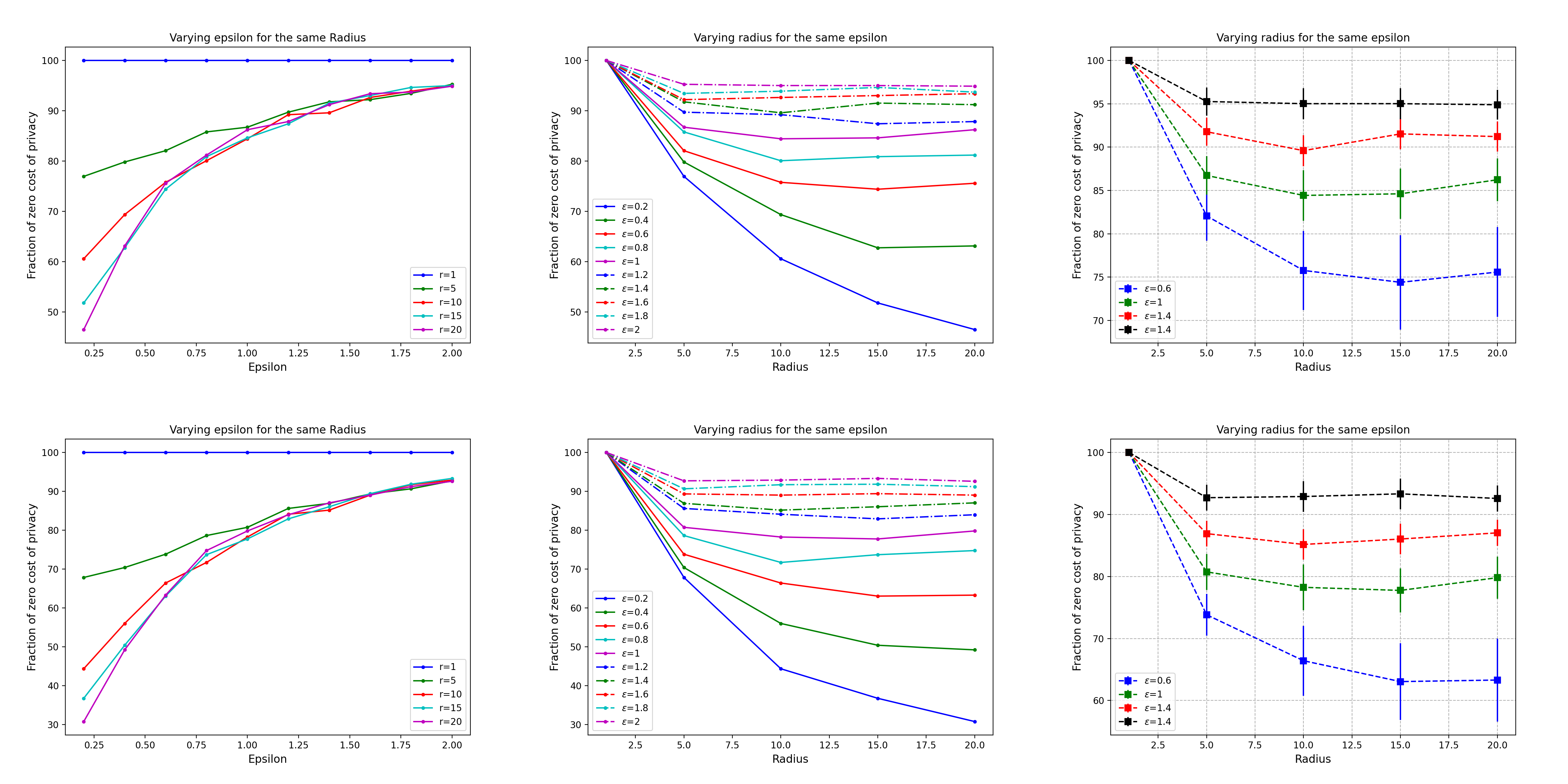}
  \caption{Fraction of zero CoS for varying $\epsilon$ or $r$ of AGeoI ($1st$ row is for sparse CSs, $2nd$ row is for dense CSs)}
  \label{fig:privacy_for_free}
\end{figure*}

\begin{figure*}[ht!]
  \centering
  \includegraphics[width=0.8\textwidth]{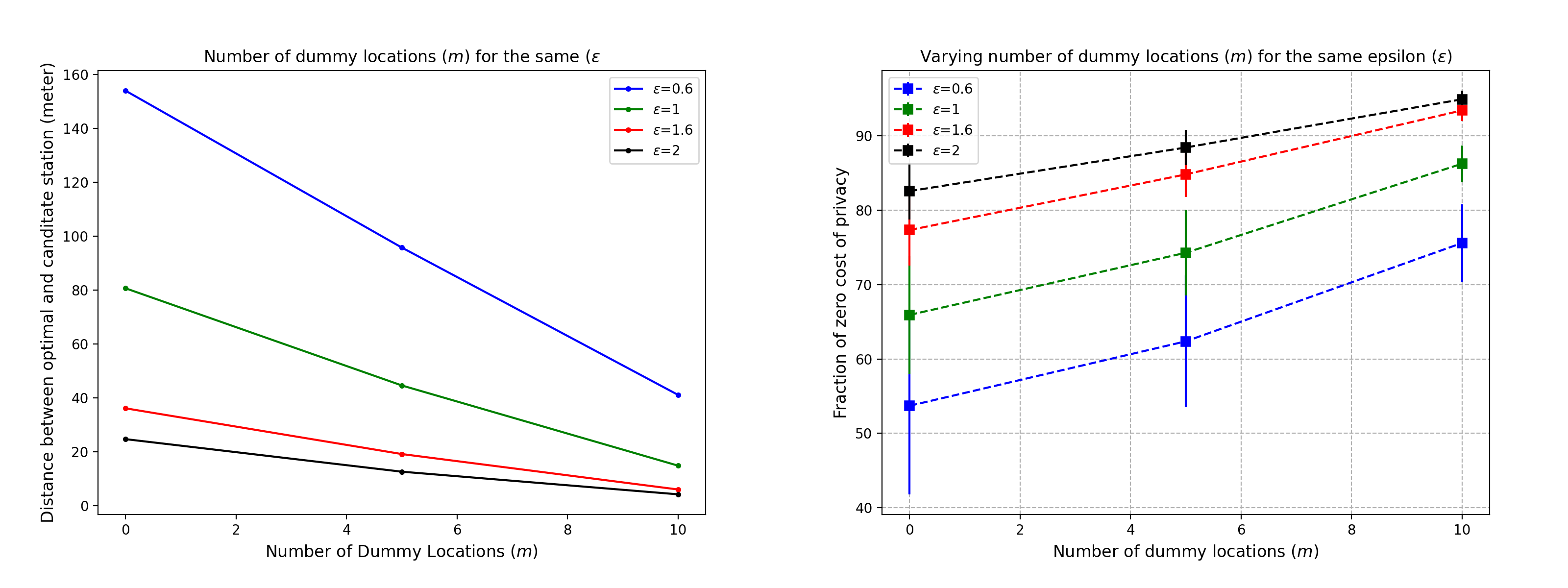}
  \caption{Impact of introducing dummy locations along
with AGeoI on the CoS}
  \label{fig:impact_of_dummydata}
\end{figure*}

\begin{figure}[ht!]
  \centering
  \includegraphics[width=0.4\textwidth]{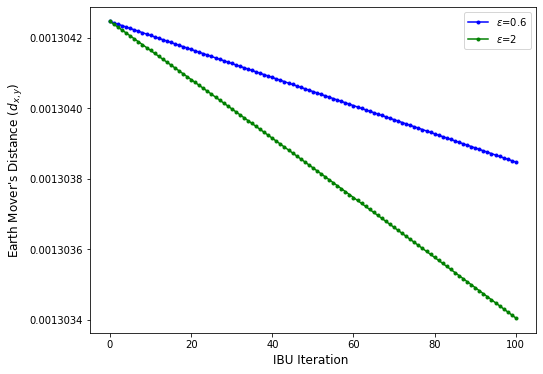}
  \caption{The distance between the original and estimated distributions using IBU for $\epsilon=0.6$ and $\epsilon=2$ noisy distributions by the earth mover’s distance (i.e. Kantorovich-Wasserstein distance)}
  \label{fig:ibu}
\end{figure}

\begin{figure}[ht!]
  \centering
  \includegraphics[width=0.48\textwidth]{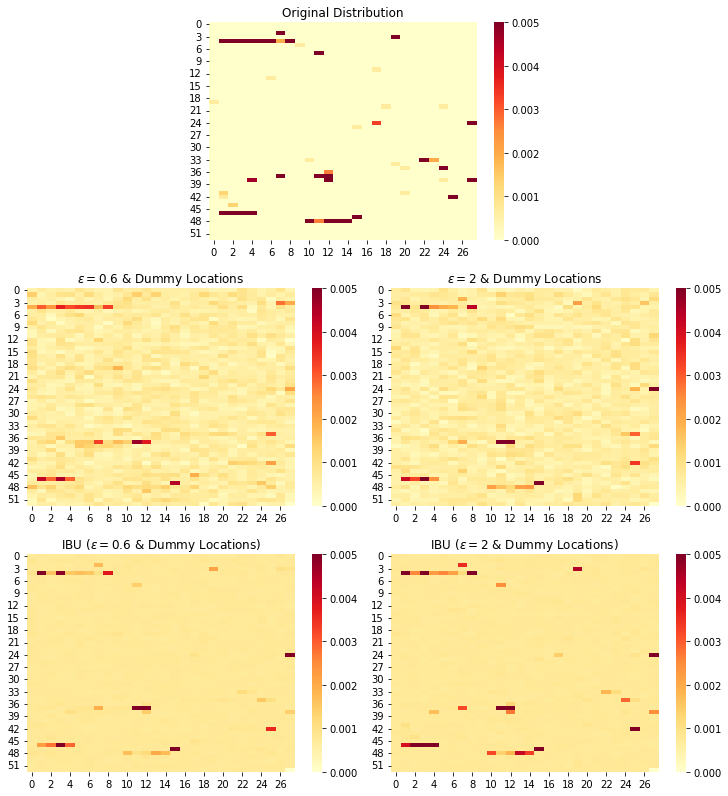}
  \caption{Estimations of the original distribution using IBU for the $\epsilon=0.6$ and $\epsilon=2$ noisy distributions.}
  \label{fig:ibu_heat}
\end{figure}

\subsection{Results and Discussion}

\subsubsection{Cost of Privacy Analysis}
DP approaches introduce a trade-off between privacy and data utility, with a higher level of privacy requiring a greater level of noise. The efficacy of the respective service may correspondingly decrease due to the fall in data utility, and it is considered the `cost of privacy' (CoS) in this study. However, there is a finite number of cases with maximum efficacy for many applications. Thus, there must be a sweet spot of privacy for free where the user will receive the maximum efficacy with some level of privacy. It is also valid for our use case, in which a vehicle must be able to utilise a finite number of unique CSs. 

The following results are achieved by carrying out the experiments for $3$ linked queries of $536$ randomly selected vehicle trajectories for varying values of $\epsilon$ or $r$ ranging from $0.2$ to $2$, and $1$ to $20$, respectively. Figure~\ref{fig:cost_of_privacy} demonstrates the CoS in terms of the extra travelling distance due to the privacy-preserving mechanism, where a similar pattern is observed for both of the datasets. Another observation is that a high frequency of queries resulted in no cost for privacy preservation. Figure~\ref{fig:privacy_for_free} shows the fraction of the queries with ``privacy for free'' where both the datasets followed similar patterns. Vehicle queries contain dummy locations and their privatised true locations. It is possible that the dummy locations can sometimes provide a better utility, but our experiments consider the utility of a privatised location as the worst-case for analysis.

Figure~\ref{fig:cost_of_privacy} shows that our method provides a negligible cost of utility-loss for the formal privacy-gain enjoyed by the EVs. By increasing the truncation radius, an abrupt drop in the distance between the location of the nearest available charging station for the true location of the query and that of the privatised one, implies that the cost of the extra travel distance needed to be taken due to the AGeoI guarantee is almost negligible. A similar trend is seen for the varying $\epsilon$ with a fixed radius. As the level of privacy decreases, the fraction of EVs in the network enjoying \emph{privacy for free} grows to be more than $60\%$ for a radius of truncation of merely 10 road segments, where each segment is 100 meters long, for $\epsilon\geq 0.5$. However, more than $90\%$ of the EVs achieve a zero cost of privacy for $\epsilon\geq 1.5$, irrespective of the truncation radius as illustrated in Figure~\ref{fig:privacy_for_free}. Due to increasing perturbation in disclosed location, the width of the confidence interval for zero cost of privacy increases, as seen in Figure~\ref{fig:privacy_for_free}. The likelihood of achieving zero cost of privacy fluctuates over a wider range and it does not monotonically decrease with the growing radius due to rising randomness.

\subsubsection{Impact of Dummy Data Generation}
Considering an adversary interested in finding the true locations of the EVs, $(\alpha, \beta]$\emph{-identifiability} is defined for any location $x$ as $\mathbb{P}[d(y,x) < \alpha)>\beta$, where $y$ is any guessed location by the adversary. With proposed method, with a sufficiently small radius of truncation to obfuscate the true location using the truncated Laplace and generating $m-1$ dummy locations in the area of coverage of the Edge, the probability of hitting the true $x$ within an error of $\alpha$ is $\mathbb{P}[d(x,y)<\alpha]= \frac{1}{m}c e^{-\epsilon \alpha} =\beta$, where $c$ is the normalising constant.

There has been some work in this area from the perspective of just GeoI~\cite{qiu2020location, luo2019geo, zhou2018achieving, Li_PerturbationHidden, zhang2018privacy} or just from the standpoint of generating dummy locations exploiting anonymisation techniques~\cite{Freudiger:109437,Jiang_LPPM}. One of the first major concerns in using only GeoI is when we allow dynamic and multiple queries along the journey of the EVs, as individual locations, despite being privatised, can still be interpolated to approximate the entire trace. If only dummy locations are used, however, any estimated (or observed) $y$ could be the real location w.p. $\frac{1}{m-1}$, as there is no formal privacy guaranteed, i.e., every location $x$ has, is $(0,1/m-1)$-identifiable among $(m-1)$ dummy locations. 
With potential parallel processing, brute-force attacks are just one way that it has been shown that anonymisation techniques are not sufficient to protect privacy~\cite{zang2011anonymization}.

Figure~\ref{fig:impact_of_dummydata} illustrates how the CoP increases with an increase in the noise due to lack of dummy locations under the same level of identifiability. To achieve the same $(\alpha,\beta)$-identifiability with just AGeoI without dummy locations, the parameter $\epsilon$ needs to be scaled by $\frac{1}{\ln{m}}$, i.e., more noise needs to be added, which results in having a worse trade-off between privacy and CoP for the same level of privacy.  

\subsubsection{Real-time Predictive Study}
Predicting the availability of CSs is a crucial component of EV trip planning and can help ease range anxiety. SoA methods utilised machine learning-based approach to develop such prediction models~\cite{hecht2021predicting,nait2018prediction, ma2022multistep, almaghrebi2020data,luo2021deep}.The main consideration of these models is that drivers can book timeslots for CSs and the prediction is made based on factors such as past usage of CCs, traffic density, and some other features such as weather conditions. However, such consideration may be limited to facilitating effective EV journey planning, given that traffic is highly dynamic and subject to unexpected changes. Due to the traffic, EVs may be late for their scheduled charging time, and another EV cannot be navigated to charge from the same station, despite the fact that it is empty. Hence, a real-time predictive analysis would be critical to determine the likelihood of a CS being available when an EV arrives. Our proposed method provides privacy-preserved live traffic distribution of the querying vehicles. IBU method~\cite{EhabGIBU,EhabConvergenceIBU} is utilised to demonstrate the statistical distance between the predicted and the original traffic distribution for our results with $\epsilon=0.6$ and $\epsilon=2$, see Figure~\ref{fig:ibu} for IBU with 100 iterations where the distance is decreasing by iterations of IBU. Although such a statistical distance, it is more important to predict the heavy hitters, which helps the prediction of how likely a CS will be available when the vehicle arrives, see Figure~\ref{fig:ibu_heat} for the heatmaps of original, noisy, and predicted traffic distributions. 




\section{Conclusion}\label{sec:conclusion}

This paper studied a fundamental problem of the risk of privacy violation for EVs dynamically querying for charging stations along their journeys. The setting of the problem has not been addressed in the literature, and some of the related techniques along the lines of privacy-preserving vehicle routing cannot be adapted directly into the practical model considered in this paper.

To resolve this, we theorised the notion of AGeoI and justified its mathematical soundness and applicability by proving the compositionality theorem, which is a novelty in itself and allows us to attain GeoI in a strictly bounded space of noisy data. We derived the appropriate privacy parameters to prove that the truncated Laplace mechanism satisfies AGeoI and used it to propose a location privacy-preserving method for EVs querying for CSs. Our method protects privacy in two ways: at the specific positions of the queries and along the entire journey.

In experimental studies, datasets with real vehicle traces and locations were used to demonstrate the trade-off between privacy and utility and the impact of dummy locations on this trade-off. We used IBU for real-time prediction of the original distribution of the EVs from the reported (noisy) locations. The proposed method is distinct from current machine learning-based approaches in that it considers real-time changes in the number of location-based queries. Thus, it can reflect unprecedented traffic variations at the CSs occupancy rate. By using IBU, the method is capable of predicting the likelihood of a particular station being occupied by another vehicle at the time of arrival and, hence, enables an online prediction technique to estimate the availability of CSs around an EV, allowing the users to do convenient route-planning. A consistent trend of a substantial majority of the EVs to have privacy for free was observed across all the experiments, i.e., most of the EVs suffer no loss of utility even for fairly high-level formal AGeoI. In general, we observe that the cost of privacy induced by our method is fairly low across settings, thus, ensuring formal privacy protection for the location of the EVs without incurring a high price for that as far as the journey is concerned. We dissected this cost of privacy under our method using Voronoi decomposition to draw insight into the workings of our method.



\section*{Acknowledgment}
The authors would like to thank Professor Graham Cormode for the insightful discussions that supported the development of this study. Ugur Ilker Atmaca and Sayan Biswas are shared co-first authors who have worked together on this paper and contributed equally. The authors would also like to extend their sincere thanks to the European Research Council (ERC) Advanced Grant of Catuscia Palamidessi which supported Sayan Biswas’ research visit at WMG, the University of Warwick, which enabled this successful collaboration.

\bibliographystyle{IEEEtran}
\bibliography{references.bib}

\end{document}